\documentclass{article}

\expandafter\let\csname equation*\endcsname\relax
\expandafter\let\csname endequation*\endcsname\relax
\usepackage{amsmath}
\usepackage{amsthm}
\usepackage{amsfonts}
\usepackage{amssymb}
\usepackage{graphicx}
\usepackage[margin = 1.4in]{geometry}

\newtheorem{Thm}{Theorem}
\newtheorem{Prop}[Thm]{Proposition}
\newtheorem{Cor}[Thm]{Corollary}
\newtheorem{Lem}[Thm]{Lemma}

\newtheorem{Def}[Thm]{Definition}

\theoremstyle{definition}

\newtheorem{Ex}[Thm]{Example}

\newcommand{\BPP}{bounded parameter property}

\newcommand{\bpp}{b.p.p}

\newcommand{\man}[1]{\ensuremath{\mathcal{#1}}}
\newcommand{\cur}[1]{\ensuremath{\mathcal{#1}}}
\newcommand{\AB}[1][M]{\ensuremath{\mathcal{B}(\man{#1})}}
\newcommand{\covers}{\ensuremath{\rhd}}
\newcommand{\Bound}[1]{\ensuremath{\textnormal{B}({#1})}}
\newcommand{\EmptySet}{\ensuremath{\varnothing}}
\newcommand{\cleq}[1]{\ensuremath{{#1}^\leq}}
\newcommand{\cfil}[1]{\ensuremath{\cur{C}\left({#1}\right)}}

\newcommand{\App}[2]{\ensuremath{\textnormal{App}({#1},{#2})}}
\newcommand{\Unapp}[2]{\ensuremath{\textnormal{Nonapp}({#1},{#2})}}
\newcommand{\AppBound}[2]{\ensuremath{\textnormal{App}_{\textnormal{\scriptsize  Sing}}({#1},{#2})}}
\newcommand{\AppUnbound}[2]{\ensuremath{\textnormal{App}_{\textnormal{\scriptsize  Inf}}({#1},{#2})}}
\newcommand{\Env}[1][\man{M}]{\ensuremath{\Phi({#1})}}
\newcommand{\InexCom}[1][\man{M}]{\ensuremath{\text{NPre}({#1})}}
\newcommand{\BPPSet}[1][\man{M}]{\ensuremath{\text{BPP}({#1})}}
\newcommand{\BPPInexCom}[1][\man{M}]{\ensuremath{\text{NPre}_\textnormal{\scriptsize\hspace{1pt}\bpp.}({#1})}}

\begin{document}

\title{The dependence of the abstract boundary classification on a set of curves I: 
An algebra of sets on bounded parameter property satisfying sets of curves}

\author{B E Whale\footnote{bwhale@maths.otago.ac.nz} \footnote{Department of Mathematics and Statistics, University of Otago}}

\maketitle

\begin{abstract}
	The abstract boundary uses sets of curves with the bounded parameter
  property (b.p.p.) to classify the elements of the abstract boundary
  into regular points, singular points, points at infinity and so on.
  To study how the classification changes as this set of curves is changed
  it is necessary to describe the relationships between these sets of curves in 
  a way that reflects the effect of the curves on the classification. The
  usual algebra of sets fails to do this. We remedy this situation by
  generalising inclusion, intersection, and union: producing an algebra
  of sets on the set of all b.p.p. satisfying sets of curves that does
  appropriately describe the relative effects on the classification.
  In Part II we use this algebra of sets to show how the classification
  changes as the set of b.p.p. satisfying set of curves is changed with respect
  to this generalization.
\end{abstract}

\section{Introduction}\label{Intro}\label{sec.Background}
  A boundary construction in General Relativity
  is a method to attach
  `ideal' points to a Lorentzian manifold. The constructions are 
  designed so that the ideal points can be classified 
  into physically
  motivated classes such as regular points, singular points, points
  at infinity and so on.

  To do this most boundary constructions use, implicitly or explicitly, 
  a set of curves, usually with a particular type of parametrization. 
  For example the $g$-boundary, \cite{Geroch1968a}, relies on incomplete 
  geodesics with affine parameter, 
  the $b$-boundary, \cite{Schmidt1971}, on incomplete curves with generalized 
  affine parameter and the $c$-boundary, \cite{GerochPenroseKronheimer1972},
  and its modern variants,
  \cite{Marolf2003New,citeulike:8211160,Flores2010Final}, on
  endless causal curves. 
  
  Given the importance, however, of providing physical 
  interpretations of boundary points and the necessity of using curves to do 
  so, even if no choice of curves is forced by a boundary construction 
  a choice will need to be made
  at some point for physical applications. 
  The boundary construction considered in this paper, the abstract boundary
  \cite{ScottSzekeres1994}, is an example of this.
  It does not use a set of curves for its construction but it does use
  a set of curves for its classification.
  It allows any set of curves to be used
  as long as the set
  satisfies the bounded parameter property (\bpp.), 
  see Definition \ref{def.BPP}. The \bpp.\ ensures
  that a consistent physical interpretation can be applied to
  abstract boundary points.
  
  Papers such as \cite{ScottSzekeres1994,Geroch1868} 
  reiterate the point that careful consideration of the set of curves
  used in a classification of boundary points is 
  needed to get a correct definition
  of a singularity. Indeed, the issues with giving a consistent physical 
  interpretation, raised by the non-Hausdorff and non-$T_1$ separation
  properties of the $g$-, $b$- and older $c$-boundaries,
  is related to the set of curves used for classification `being too big', 
  e.g. including precompact timelike geodesics. For these boundaries, as the
  set of curves is also 
  connected to the construction of the boundary points, the
  inclusion of `too many curves' is part of the root cause of these
  separation properties, 
  \cite{Ashley2002a,Whale2010}. For example the non-Hausdorff behaviour
  of the $b$-boundary is directly related
  to the existence of inextendible incomplete curves that have more than
  one limit point, 
  \cite[Proposition 8.5.1]{HawkingEllis1973}.
  
  Thus the study of how sets of curves 
  associated to boundary constructions affects the construction and
  classification of boundary points is a fundamental, if over looked,
  component of the study of boundary constructions, and
  therefore singularities, in General Relativity.
  This begs the question of how the classification of abstract boundary points
  changes when the b.p.p.\ satisfying set of curves is changed. We present
  the answer to this question in the two parts of this series of papers.

  Given a b.p.p.\ satisfying set of curves, $\cur{C}$, the boundary, 
  $\partial\phi(\man{M})=\overline{\phi(\cur M)}-\phi(\cur M)$, of an embedding
  of a manifold $\phi:\man{M}\to\man{M}_\phi$, $\dim \man{M}=\dim\man{M}_\phi$,
  can be divided into three sets; $\AppBound{\phi}{\cur{C}}$, the set of
  points which are approach by at least one curve with bounded parameter;
  $\AppUnbound{\phi}{\cur{C}}$, the set of points which are only approached
  by curves with unbounded parameters; and $\Unapp{\phi}{\cur{C}}$, the set of points that
  are not approached. This division is used to produce the classification of
  the abstract boundary points represented by the elements of $\partial\phi(\man{M})$,
  see Section 1.1 and Proposition 16 of \cite{Whale2012b}.
  
  In order to study how the classification changes when the b.p.p.\ satisfying
  set of curves is changed it is necessary to have a way of relating
  b.p.p.\ satisfying sets that gives information about the relationship
  of the induced division of $\partial\phi(\man{M})$, for all $\phi\in\Env$.
  The usual algebra of sets (inclusion, intersection and union) fails
  to give all information of this type. In particular, there can exist curves
  in a b.p.p.\ satisfying set that do not contribute to the division of
  $\partial\phi(\man{M})$, for all $\phi\in\Env$. Such curves play no role
  in the classification of boundary or abstract boundary points, yet can
  prevent the intersection or union of b.p.p.\ satisfying sets from being
  well defined.
  Hence, before we can analyse how the classification
  changes,
  we need a more appropriate way of comparing b.p.p.\ satisfying sets of curves.
  
  This problem is solved in  three steps.
  First, we show that there is a 
  correspondence between the set of all \bpp.\ satisfying sets of
  curves and the set of all sets of non-precompact curves
  that have `compatible' parameters, see Definition \ref{def.overline}.
  Second, we explore this correspondence and demonstrate that, with respect to
  suitable equivalence relations, the induced correspondence is 
  bijective. These equivalence relations ensure that sets of curves
  are identified only if they produce the same classification of
  abstract boundary points.
  That is, we prove that non-precompact sets of
  curves, with compatible parameters,
  can be used as a replacement for \bpp.\ satisfying sets of curves
  in the abstract boundary classification. 
  Third, we use this result to give a generalization of the 
  algebra of sets restricted to the set of all b.p.p.\ satisfying sets
  of curves. This generalization solves the problem discussed in the 
  previous paragraph.
  
  The paper is divided into four sections. This section continues
  with preliminary definitions. Section \ref{sec.main} introduces
  the relationship between \bpp.\ satisfying sets 
  of curves and sets of non-precompact curves with compatible
  parameters.
  Section \ref{Sc one-to-one correspondences} presents two pairs of 
  equivalence relations under which the correspondence of Section \ref{sec.main}
  becomes a bijection. 
  The first pair is easy to use in applications, but there exist
  pairs of b.p.p.\ satisfying sets of curves which give the same
  division of $\partial\phi(M)$, for all $\phi\in\Env$,
  which are not identified. 
  The second pair gives the 
  appropriate equivalence relations for describing
  when two sets of b.p.p.\ satisfying sets of curves produce 
  the same division of $\partial\phi(M)$, for all $\phi\in\Env$,
  but they will be difficult to
  use in practice.
  In Section \ref{sec:realtions} we complete this work by;
  giving the claimed generalizations of inclusion, intersection and union;
  proving that the generalizations form an algebra of sets; and
  proving that the division of $\partial\phi({M})$ induced by
  b.p.p.\ satisfying sets related via our generalization of the algebra of
  sets are similarly related by the usual algebra of sets.

\subsection{Preliminary definitions}

  We shall only consider manifolds, \man{M}, that are
  paracompact, Hausdorff, connected, $C^\infty$-mani-folds.
  A precompact set is one whose closure is compact.

\begin{Def}[{\cite[Definition 9]{ScottSzekeres1994}}]
  An embedding, $\phi:\man{M}\to\man{M}_\phi$, of $\man{M}$ is an envelopment 
  if $\man{M}_\phi$ has 
  the same dimension as $\man{M}$.  Let $\Env$ be the set of all envelopments 
  of $\man{M}$.
\end{Def}

\begin{Def}[{\cite[Definition 14 and 22, Theorem 18]{ScottSzekeres1994}}]
  Let $\Bound{\man{M}}$ be the set of all ordered pairs $(\phi,U)$
  of envelopments $\phi$ and subsets $U$ of 
  $\partial\phi(\man{M})=\overline{\phi(\man{M})}-\phi(\man{M})$. That is,
  \[
    \Bound{\man{M}}=\{(\phi,U): \phi\in\Env,\, U\subset\partial\phi(\man{M})\}.
  \]
  Define an partial order $\covers$ on 
  $\Bound{\man{M}}$ by $(\phi, U)\covers(\psi, V)$ if and only 
  if for every sequence $\{x_i\}$ in \man{M}, $\{\psi(x_i)\}$ has a 
  limit point in $V$ implies that $\{\phi(x_i)\}$ has a limit point in $U$.  
  We can construct an equivalence relation $\equiv$ on $\Bound{\man{M}}$ by 
  $(\phi, U)\equiv(\psi, V)$ if and only if $(\phi, U)\covers(\psi, V)$ and
  $(\psi, V)\covers(\phi, U)$.    Denote the equivalence class of $(\phi,U)$ by 
  $[(\phi, U)]$.
  
  The abstract boundary is the set
  \[
    \AB=\left\{[(\phi,U)]\in\frac{\Bound{\man{M}}}{\equiv}:\exists 
    (\psi,\{p\})\in[(\phi,U)] \right\}.
  \]
  It is the set of all equivalence classes of $\Bound{\man{M}}$ under the 
  equivalence relation $\equiv$ that contain an element $(\psi,\{p\})$ where 
  $p\in\partial\psi(\man{M})$.   The elements of the abstract boundary are referred to as abstract boundary
  points.
\end{Def}

Abstract boundary points behave in many ways like `normal' points,
\cite{FamaClarke1998,FamaScott1994}.
Following the lead of \cite{ScottSzekeres1994} we work with $C^0$ 
piecewise $C^1$ curves. 

\begin{Def}[{\cite[Definition 1, 2, and 3]{ScottSzekeres1994}}]
\label{def.curves}
A parametrized $C^0$ piecewise $C^1$ curve (or curve) $\gamma$ in the manifold 
\man{M} is a $C^0$ map
$\gamma:[a,b)\to\man{M}$ where $[a,b)\subset\mathbb{R}\cup\{\infty\}$, 
$a<b\leq\infty$ with a finite subset $a=\tau_0,\tau_1,\ldots,\tau_m=b$ so that 
on each segment $[\tau_i,\tau_{i+1})$
the tangent vector $\gamma':[\tau_i,\tau_{i+1})\to T\man{M}$ is 
everywhere non-zero.			
We shall say that $\gamma$ is bounded if $b<\infty$ otherwise $\gamma$ is 
unbounded.

A curve $\delta:[a',b')\to\man{M}$ is a subcurve of $\gamma$ if 
$a\leq a'< b' \leq b$ and $\delta=\gamma|_{[a',b')}$.  
That is a curve $\delta$ is a subcurve of $\gamma$ if $\delta$ is the 
restriction of $\gamma$ to some right-half open interval of
$[a,b)$. We shall denote this by $\delta\leq\gamma$. If $a'=a$ and $b'<b$ we 
shall say that $\gamma$ is an extension of $\delta$. 
A curve $\delta$ is
inextendible if there does not exist an extension of $\delta$.

A change of parameter is a monotone increasing surjective $C^1$ function, 
$s:[a',b')\to[a,b)$.  The curve $\delta$ is obtained 
from the curve $\gamma$ if $\delta = \gamma \circ s$. 

The curve
$\gamma:[a,b)\to \man{M}$ is precompact if $\overline{\gamma([a,b))}$ is
compact in $\man{M}$. That is $\gamma$ is precompact if its image is precompact.
\end{Def}
  
The condition $\delta=\gamma|_{[a',b')}$ requires the two curves to be
equivalent as functions, on the appropriate domain. This implies that 
the definition of $\leq$ considers the parametrization chosen for $\gamma$.  
That is, if $\delta(t)\leq\gamma(t)$
then we know that
$\delta(t)\not\leq\gamma(2t)$. Thus
we draw a distinction between different parametrizations of the same image of a curve. 
    
\begin{Def}[{\cite[Definition 4]{ScottSzekeres1994}}]\label{def.BPP}
  A set $\cur{C}$ of parametrized curves is said to have 
  the \BPP\ (or \bpp.) 
  if the following conditions are satisfied;
    
  \begin{enumerate}
    \item For all $p\in\man{M}$ there exits $\gamma:[a,b)\to\man{M}\in\cur{C}$ 
    so that $p\in\gamma([a,b))$.\label{def.BPP.1}
    \item If $\gamma\in\cur{C}$ and $\delta\leq\gamma$ then $\delta\in\cur{C}$.\label{def.BPP.2}
    \item For all $\gamma,\delta\in\cur{C}$, if $\delta$ is obtained from 
    $\gamma$ by a change of parameter
      then either both curves are bounded or both are unbounded.\label{4.3}\label{def.BPP.3}
  \end{enumerate}
\end{Def}
  
The classification of abstract boundary points, \cite[Section 4 and 5]{ScottSzekeres1994},
is given by a classification of boundary points -- that is
elements of $\partial\phi(\man{M})$ for each
$\phi\in\Env$ -- and then by showing which `parts' of this classification
are invariant under the equivalence relation $\equiv$. 
The classification of boundary points of an envelopment, $\phi\in\Env$,
depends on an analysis of the relative bounded/un-boundedness
of curves in a \bpp.\ satisfying
set that have common limit points in $\partial\phi(\man{M})$ as well as  
certain properties of
$\covers$ and the Lorentzian metric. 
Here we are only concerned with the structure of
\bpp.\ satisfying sets necessary for the classification and
therefore only concerned with the relative bounded/un-boundedness
of elements of our \bpp.\ satisfying set, \cite[Section 4]{ScottSzekeres1994}.
  
\begin{Def}[{\cite[Definition 23]{ScottSzekeres1994}}]\label{approachable-points}
  Let $\phi\in\Env$ and $\cur{C}$ be a set of curves with the \bpp. A 
  boundary point $p\in\partial\phi(\man{M})$ is approachable if 
  there exists $\gamma\in\cur{C}$ so that $p$ is a limit point 
  of the curve $\phi\circ\gamma$. 
  We make the following definitions;
  \begin{align*}
    \App{\phi}{\cur{C}}&=\{p\in\partial\phi(\man{M}):p
      \text{ is approachable}\}\\
    \Unapp{\phi}{\cur{C}}&=\{p\in\partial\phi(\man{M}):p
      \text{ is not approachable}\}\\
      &=\partial\phi(\man{M})-\App{\phi}{\cur{C}}.
  \end{align*}
\end{Def}

\begin{Def}\label{singinfpoints}
  Let $\AppBound{\phi}{\cur{C}}$ and $\AppUnbound{\phi}{\cur{C}}$ 
  be defined by, 
  \begin{multline*}
    \AppBound{\phi}{\cur{C}}=\{p\in\App{\phi}{\cur{C}}:\text{ there
    exists }\\ \text{ a bounded }
    \gamma\in\cur{C}\text{ with } 
    p\text{ as a limit point of}\ \phi\circ\gamma\}
  \end{multline*}
  \vspace{-1.005cm}
  \begin{multline*}
    \AppUnbound{\phi}{\cur{C}}=\{p\in\App{\phi}{\cur{C}}:
    \text{ for all }
    \gamma\in\cur{C}\\ \text{ so that }
    p\text{ is a limit point of}\ \phi\circ\gamma,\, \gamma\text{ is unbounded}\}.
  \end{multline*}
\end{Def}

  We use the subscripts `Sing' and `Inf' as the corresponding sets are 
  closely related to the abstract boundary definitions of singularities
  and points at infinity respectively, 
  \cite[Definitions 31 and 37]{ScottSzekeres1994}. 
  If $\man M$ is the 
  Schwarzschild spacetime and $\cur{C}$ is the set
  of all affinely parametrized null geodesics then,
  with respect to the envelopment given by the Penrose-Carter
  conformal compactification, 
  a point of the
  singularity is an element of \AppBound{\phi}{\cur{C}}
  and a point of future timelike infinity
  is an element of \AppUnbound{\phi}{\cur{C}}.
  
  If two \bpp.\ satisfying sets $\cur{C}$ and $\cur{D}$ are such that,
  for all $\phi\in\Env$, 
  $\AppBound{\phi}{\cur{C}}=\AppBound{\phi}{\cur{D}}$ and 
  $\AppUnbound{\phi}{\cur{C}}=\AppUnbound{\phi}{\cur{D}}$ then the 
  classifications of abstract boundary points given by $\cur{C}$ and $\cur{D}$
  will be the same, \cite[Section 4 and 5]{ScottSzekeres1994}.

\section{Non-precompact curves and b.p.p.\ satisfying sets of\newline curves}
\label{sec.main}

In this section we show that every \bpp.\ satisfying set corresponds to a set 
of compatible non-precompact curves and vice versa.
The compatibility condition is related to Condition \eqref{4.3} of Definition 
\ref{def.BPP}.

\begin{Def}\label{BPPSet}
	Let $\BPPSet$ be the set of all sets of curves with the \bpp. That is
	$\BPPSet=\{\cur{C}: \cur{C}$ is a set of curves with the \bpp.$\}.$
\end{Def}

\begin{Def}\label{def.overline}
  Let $\InexCom$ be the set of non-precompact 
  curves. Let 
  $\BPPInexCom$ be the set of subsets, $S$, of 
  $\InexCom$ such that for all 
  $\gamma:[a,b)\to\man{M},\,\delta:[p,q)\to\man{M}\in S$, if there exists 
  $c\in[a,b),r\in[p,q)$ and a change of parameter $s:[r,q)\to[c,b)$ so that 
  $\gamma|_{[c,b)}\circ s=\delta|_{[r,q)}$ then either both 
  $\gamma$ and $\delta$ are bounded or both
  are unbounded. We say that the elements of $S$ have compatible parameters.
  We allow $\EmptySet\in\BPPInexCom$.
\end{Def}

In effect Definition \ref{def.overline} says that two curves that eventually
have the same image must either both be bounded or both unbounded.
With this notation we can now give the first of two functions which give the
correspondence, mentioned in the introduction.

\begin{Prop}\label{prop:fdef}
  There exists a function $f:BPP(\man{M})\to \BPPInexCom$, given 
  by 
  $f(\cur{C})=\InexCom\cap\cur{C}$.
\end{Prop}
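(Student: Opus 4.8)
The plan is to verify that $f$ is well defined, i.e.\ that $\InexCom\cap\cur{C}$ actually lies in $\BPPInexCom$ whenever $\cur{C}\in\BPPSet$. Membership in $\BPPInexCom$ (Definition \ref{def.overline}) demands two things: that $\InexCom\cap\cur{C}$ be a subset of $\InexCom$, and that it satisfy the reparametrisation condition of that definition. The first requirement is immediate, since $\InexCom\cap\cur{C}\subseteq\InexCom$ by construction, so every curve in the image has non-compact closure. All the work is in the reparametrisation condition.

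I would derive that condition from conditions 2 and 3 of the \bpp.\ (Definition \ref{def.BPP}). Take $\gamma:[a,b)\to\man{M}$ and $\delta:[p,q)\to\man{M}$ in $\InexCom\cap\cur{C}$, and suppose there are $c\in[a,b)$, $r\in[p,q)$ and a change of parameter $s:[c,b)\to[r,q)$ whose effect is to identify the right-tails $\gamma|_{[c,b)}$ and $\delta|_{[r,q)}$ up to reparametrisation, as in Definition \ref{def.overline}. First I would use condition 2 to bring these tails into $\cur{C}$: since $\gamma\in\cur{C}$ and $\gamma|_{[c,b)}<\gamma$, we get $\gamma|_{[c,b)}\in\cur{C}$, and likewise $\delta|_{[r,q)}\in\cur{C}$. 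By hypothesis one of these tails is obtained from the other via $s$, so condition 3 applies to the pair $\gamma|_{[c,b)},\delta|_{[r,q)}\in\cur{C}$ and tells us they are either both bounded or both unbounded.

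The remaining step is to transfer this conclusion from the tails back to $\gamma$ and $\delta$. Here I would invoke the elementary observation that, by Definition \ref{def.curves}, boundedness depends only on the right endpoint of the domain: $\gamma$ is bounded precisely when $b<\infty$, and this is unaffected by passing to the right-tail $\gamma|_{[c,b)}$. Hence $\gamma$ and $\gamma|_{[c,b)}$ share boundedness, as do $\delta$ and $\delta|_{[r,q)}$; chaining these two equivalences with the output of condition 3 yields that $\gamma$ and $\delta$ are both bounded or both unbounded, which is exactly the reparametrisation condition.

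There is no deep obstacle in this argument; the single point requiring care is the interface between the two formulations. The reparametrisation condition of Definition \ref{def.overline} is stated in terms of \emph{tails} of curves, whereas condition 3 of the \bpp.\ speaks of whole curves obtained from one another. The bridge is condition 2, which legitimises treating the tails as genuine members of $\cur{C}$, together with the fact that passing to a right-tail preserves boundedness. Once both of these are in hand the implication is routine, so I expect the proof to be short.
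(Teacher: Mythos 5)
Your proposal is correct and follows the only natural route — direct verification of the two clauses of Definition \ref{def.overline}, with the reparametrisation clause deduced from conditions 2 and 3 of the \bpp.\ via the observation that passing to a right-tail preserves the right endpoint and hence boundedness. The paper's own proof is a one-line assertion that $\InexCom\cap\cur{C}\in\BPPInexCom$ ``by definition of $\cur{C}$'', so your argument simply supplies the details the paper leaves implicit.
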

\begin{proof}
  The function is well defined and by definition of $\cur{C}$ we 
  know that $\InexCom\cap\cur{C}\in\BPPInexCom$.
\end{proof}

It takes slightly more work to define our other function.

\begin{Def}
  Given a set of curves $S$, let 
  $P(S)=\{p\in\man{M}:\exists\gamma:[a,b)\to\man{M}\in S$ so that 
  $p\in\gamma([a,b))\}$.
\end{Def}

Thus $P(S)$ is the set of all points in the manifold that are contained in 
the image of some curve in $S$.

For each $p\in \man M$ choose $V_p$ a precompact 
open normal neighbourhood of $p$ and choose $v_p\in T_p\man{M}$. Let 
$\gamma_p:[-\epsilon_p,\epsilon_p)\to\man{M}$, for some
$\epsilon_p>0$ depending on $p$, be a geodesic lying in $V_p$ so that 
$\gamma_p(0)=p$ and $\gamma_p'(0)=v_p$.

\begin{Def}\label{Cnormdef}  
  Let
  $\cfil{U}=\{\delta$ a curve 
  $: \exists p\in U$ so that $\delta\leq\gamma_p\}$.
\end{Def}

\begin{Def}\label{curves less than}
  Let $S$ be a set of curves. Define 
  $
    \cleq{S} = \{\delta\text{ a curve}:\exists\gamma\in S,\,\delta\leq\gamma\}.
  $
\end{Def}

\begin{Prop}\label{gfunc}
	There exists a function $g:\BPPInexCom\to BPP(\man{M})$, given
	by 
  $
    g(S)=\cleq{S}\cup
    \cfil{\man{M}-P(S)}.
  $
\end{Prop}
\begin{proof}
  As before, it is clear that $g$ is well defined, so we must only check that 
  $g(S)$ satisfies the \bpp. We must check the three 
  conditions of Definition \ref{def.BPP}. Conditions $1$ and $2$
  follow directly from Definitions \ref{Cnormdef} and \ref{curves less than}.
  
  We now show that Condition $3$ holds on $g(S)$. Let 
  $\alpha:[a,b)\to\man{M},\beta:[p,q)\to\man{M}\in g(S)$ 
  be such that $\alpha$ and $\beta$ are obtained from each other by 
  a change of parameter $s:[p,q)\to[a,b)$, so that $\alpha\circ s = \beta$.
  If 
  $\alpha,\beta\in \cfil{\man{M}-P(S)}$ then 
  by definition \ref{Cnormdef} we know that both must be bounded. 
  
  Suppose that $\alpha\in\cfil{\man{M}-P(S)}$ and
  $\beta\in\cleq{S}$. 
  Since 
  $\beta\in \cleq{S}$ 
  there exists
  $\gamma_\beta:[p',q')\to\man{M}\in S$ so that $\beta\leq\gamma_\beta$, that is 
  $[p,q)\subset[p',q')$ and $\beta=\left.\gamma_\beta\right|_{[p,q)}$. 
  Since $\gamma_\beta\in S$ it must be the case that 
  $\overline{\gamma_\beta([p',q'))}$ is not 
  compact. 
  As $\alpha\in\cfil{\man{M}-P(S)}$ we know 
  that $\overline{\alpha([a,b))}$ is necessarily compact.
  Hence as 
  $\overline{\alpha([a,b))}=\overline{\alpha\circ s([p,q))}=
  \overline{\beta([p,q))}$ we know that 
  $\overline{\beta([p,q))}$ is compact. 
  Since $\overline{\beta([p,q))}\subset\overline{\gamma_\beta([p',q'))}$ 
  and $\overline{\gamma_\beta([p',q'))}$ is not compact we can conclude that 
  $[p,q)$ is a proper subset of $[p',q')$ and in particular
  that $q<q'$. Hence $q\in\mathbb{R}$ and therefore $\beta$ must be bounded.
  
  Now suppose 
  $\alpha,\beta\in
  \cleq{S}$.  
  As $\alpha\in\cleq{S}$ there 
  exists
  $\gamma_\alpha:[a',b')\to\man{M}\in S$ so that $\overline{\gamma_\alpha([a',b'))}$ 
  is not compact, $[a,b)\subset[a',b')$ 
  and $\alpha=\left.\gamma_\alpha\right|_{[a,b)}$. Let 
  $\gamma_\beta:[p',q')\to\man{M}$ be as above.
  
  If $q<q'$ then $\overline{\beta([p,q))}$ is compact and
  $\overline{\beta([p,q))}=\overline{\alpha\circ s([p,q))}=
  \overline{\alpha([a,b))}$ is also compact. Therefore
  $b<b'$ and $\alpha$ is bounded. Applying the same argument for $\alpha$ 
  shows that
  $q<q'$ if and only if $b<b'$. Thus, in this case, both $\beta$ and 
  $\alpha$ are bounded.
  
  Hence, to complete the proof, we need only consider the case when $q=q'$ and $b=b'$.
  We see that $\gamma_\beta|_{[p,q')}=\beta=\alpha\circ s = \gamma_\alpha|_{[a,b')}
  \circ s$. Since $\gamma_\alpha,\gamma_\beta\in S$
  from Definition \ref{def.overline} we know 
  that $\gamma_\alpha$ and $\gamma_\beta$ are either both 
  bounded or
  both unbounded.  Since $q=q'$ and $b=b'$ the same applies to $\alpha$ and 
  $\beta$.  That is either both are
  bounded or both are unbounded, as required.\qedhere
\end{proof}

From Definitions \ref{approachable-points} and \ref{singinfpoints} we see that
a curve 
in a \bpp.\ satisfying set provides information for the abstract boundary
classification if and only if the curve has a limit point in 
$\partial\phi(\man M)$
for some $\phi\in\Env$, see \cite[Section 4]{ScottSzekeres1994}.
That is if and only if the curve is not pre-compact. 
Hence we have the following.
\begin{Prop}\label{prop.goodtranslation}
  Let $\cur{C}\in\BPPSet$ then
  \begin{gather*}
    \AppBound{\phi}{\cur{C}}=\AppBound{\phi}{g\circ f(\cur{C})}\\
    \AppUnbound{\phi}{\cur{C}}=\AppUnbound{\phi}{g\circ f(\cur{C})}
  \end{gather*}
\end{Prop}
\begin{proof}
  Let $p\in\App{\phi}{\cur{C}}$. Then there exists $\gamma\in\cur{C}$
  so that $p$ is an accumulation point of $\phi\circ\gamma$.
  This implies that $\gamma$ is not pre-compact in $M$. Hence
  $\gamma\in f(\cur{C})$. Since $\gamma\in f(\cur{C})$
  we know that $\gamma\in (f(\cur{C}))^\leq$ so that
  $\gamma\in g\circ f(\cur{C})$. This implies that
  $p\in\App{\phi}{g\circ f(\cur{C})}$. If $p\in\AppBound{\phi}{\cur{C}}$
  then $\gamma$ can be taken to be bounded so that
  $p\in\AppBound{\phi}{g\circ f(\cur{C})}$.
  
  Likewise if $p\in\App{\phi}{g\circ f(\cur{C})}$, there exists
  $\gamma\in g\circ f(\cur{C})$ so that $p$ is an accumulation point
  of $\phi\circ\gamma$. Since $p\in\partial\phi(M)$ this implies that
  $\gamma$ is not pre-compact and therefore $\gamma\in (f(\cur{C}))^\leq$.
  Thus there exists $\delta\in f(\cur{C})$ so that $\gamma\leq\delta$.
  This implies that $\delta\in \cur{C}$ and hence
  $p\in\App{\phi}{\cur{C}}$. If $p\in\AppBound{\phi}{g\circ f(\cur{C})}$
  then $\gamma$ can be taken to be bounded. Hence $\delta$ is bounded
  and $p\in \AppBound{\phi}{\cur{C}}$.
  
  Thus we have that $\App{\phi}{\cur{C}}=\App{\phi}{g\circ f(\cur{C})}$
  and $\AppBound{\phi}{\cur{C}}=\AppBound{\phi}{g\circ f(\cur{C})}$
  which implies that $\AppUnbound{\phi}{\cur{C}}=
  \AppUnbound{\phi}{g\circ f(\cur{C})}$,
  as required.
\end{proof}
 
Hence when analysing the properties of the abstract boundary classification
it is sufficient to use $f$ and $g$ to `translate' between 
\bpp.\ satisfying sets and elements of $\BPPInexCom$. 
We give three results describing the relationship 
between $f$ and $g$.

\begin{Lem}\label{fdinverse}
  Let $S\in{\BPPInexCom}$ then 
  $S\subset f\circ g(S)$ and if $\gamma:[a,b)\to\man{M}\in f\circ g(S)-S$ 
  then there exists
  $\delta:[p,q)\to\man{M}\in S$ so that $\gamma\leq\delta$ and 
  $\gamma\neq\delta$. In particular,
  $q=b$ and $p\neq a$.
\end{Lem}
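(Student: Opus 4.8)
The plan is to first unwind the composition $f\circ g(S)$ into a tractable form, and then read off both assertions by combining the definition of $<$ with a single compactness argument.

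First I would compute $f\circ g(S)=\InexCom\cap g(S)$ directly from the definitions of $f$ and $g$. Recalling that $g(S)=\{\delta:\exists\gamma\in S\ \delta<\gamma\}\cup\cur{C}_\textnormal{\scriptsize Norm}(\man{M}-S_p)$, I observe that every curve in $\cur{C}_\textnormal{\scriptsize Norm}(\man{M}-S_p)$ is, by definition \ref{Cnormdef}, a subcurve of a geodesic $\gamma_p$ lying in a normal neighbourhood $V_p$ with $\overline{V_p}$ compact; its closure is therefore a closed subset of the compact set $\overline{V_p}$, hence compact. Consequently $\cur{C}_\textnormal{\scriptsize Norm}(\man{M}-S_p)\cap\InexCom=\EmptySet$, and so $f\circ g(S)=\InexCom\cap\{\delta:\exists\gamma\in S\ \delta<\gamma\}$. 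With this simplified form in hand, the inclusion $S\subset f\circ g(S)$ is immediate: taking $\gamma\in S$, we have $\gamma\in\InexCom$ since $S\subset\InexCom$, and $\gamma<\gamma$ places $\gamma$ in $\{\delta:\exists\gamma'\in S\ \delta<\gamma'\}$, whence $\gamma\in f\circ g(S)$.

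For the second assertion, I would take $\gamma:[a,b)\to\man{M}\in f\circ g(S)-S$. By the simplified form there is some $\delta:[p,q)\to\man{M}\in S$ with $\gamma<\delta$, i.e.\ $p\leq a<b\leq q$ and $\gamma=\delta|_{[a,b)}$. Since $\gamma\notin S$ while $\delta\in S$, we have $\gamma\neq\delta$. To force $q=b$ I would argue by contradiction: if $b<q$, then $[a,b]$ is a compact subinterval of $[p,q)$, so $\delta([a,b])$ is compact and $\overline{\gamma([a,b))}\subset\delta([a,b))\subset\delta([a,b])$ is a closed subset of a compact set, hence compact; this contradicts $\gamma\in\InexCom$. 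Therefore $q=b$. Finally, with $q=b$ fixed, if we also had $p=a$ then $[a,b)=[p,q)$ and $\gamma=\delta$, contradicting $\gamma\neq\delta$; hence $p\neq a$ (indeed $p<a$, since $\gamma<\delta$ already gives $p\leq a$).

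The only genuinely delicate point is the compactness step that pins down $q=b$: one must notice that restricting $\delta$ to a closed bounded interval lying strictly inside its domain produces a curve with \emph{compact} closure, which is precisely the property forbidden for members of $\InexCom$. Everything else is a routine unwinding of the definitions of $f$, $g$, $<$, and $\cur{C}_\textnormal{\scriptsize Norm}$, so I expect no further obstacles.
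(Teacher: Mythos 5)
Your proof is correct and follows essentially the same route as the paper's: decompose $f\circ g(S)$ into the two pieces of $g(S)$, discard the $\cur{C}_\textnormal{\scriptsize Norm}(\man{M}-S_p)$ part as consisting of curves with compact closure, and then use the compactness of $\overline{\gamma([a,b))}$ in the case $b<q$ to force $q=b$ and hence $p\neq a$. (One cosmetic slip: $\overline{\gamma([a,b))}$ need not lie in $\delta([a,b))$ itself, only in the compact set $\delta([a,b])$, but your conclusion is unaffected.)
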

\begin{proof}
  We first note that
  \begin{align*}
    f\circ g(S)&= g(S)\cap\InexCom\\
      &=\Bigl(\cleq{S}\cup
        \cfil{\man{M}-P(S)}\Bigr)\cap\InexCom\\
      &=\Bigl(\cleq{S}\cap\InexCom\Bigr)
        \cup \Bigl(\cfil{\man{M}-P(S)}
        \cap\InexCom\Bigr).
  \end{align*} 
    Since $S\subset 
    \cleq{S}\cap\InexCom$ 
    we know that
    $S\subset f\circ g(S)$.    
    
    Let
    $\gamma:[a,b)\to\man{M}\in f\circ g(S)-S$. 
    By Definitions \ref{def.overline} and \ref{Cnormdef}
    the intersection $\cfil{\man{M}-P(S)}\cap
    \InexCom$ is empty, therefore
    $\gamma\in
    \cleq{S}\cap\InexCom$.
    Hence there exists $\delta:[p,q)\to\man{M}\in S$ so that $\gamma\leq\delta$. 
    By definition we know that $b\leq q$.
    If $b<q$ then $\overline{\gamma([a,b))}$ must be compact and hence 
    $\gamma\not\in\InexCom$. Since
    this is a contradiction it must be the case that $b=q$. If $a=p$ we would 
    have $\gamma=\delta$ and thus
    $\gamma\in S$. Since this is also a contradiction it must be the case 
    that $p<a$, as required.
\end{proof}

\begin{Prop}\label{lem.fdinvers}
  Let $\cur{C}\in\BPPSet$ then $f\circ g\circ f(\cur C)=f(\cur C)$.
\end{Prop}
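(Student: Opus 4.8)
The plan is to abbreviate $S=f(\cur C)$, so that the assertion becomes $f\circ g(S)=S$. Since $S=\InexCom\cap\cur C\in\BPPInexCom$, the first half of Lemma \ref{fdinverse} immediately supplies the inclusion $S\subset f\circ g(S)$. Thus I would only need to establish the reverse inclusion $f\circ g(S)\subset S$, and for this I would argue by contradiction using the second half of Lemma \ref{fdinverse}.

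Suppose then that there exists $\gamma:[a,b)\to\man{M}$ with $\gamma\in f\circ g(S)-S$. By Lemma \ref{fdinverse} there is a curve $\delta:[p,q)\to\man{M}\in S$ with $\gamma<\delta$ and $\gamma\neq\delta$. The crucial point, and the reason the hypothesis $S=f(\cur C)$ is needed rather than an arbitrary element of $\BPPInexCom$, is that $S=\InexCom\cap\cur C\subset\cur C$, so $\delta\in\cur C$. Because $\cur C$ satisfies the \bpp., condition 2 of definition \ref{def.BPP} forces every subcurve of $\delta$ to lie in $\cur C$; in particular $\gamma\in\cur C$. On the other hand $\gamma\in f\circ g(S)=g(S)\cap\InexCom$, so $\overline{\gamma}$ is non-compact, i.e.\ $\gamma\in\InexCom$. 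Combining these two facts gives $\gamma\in\InexCom\cap\cur C=f(\cur C)=S$, which contradicts $\gamma\notin S$. Hence $f\circ g(S)-S=\EmptySet$, so $f\circ g(S)\subset S$, and together with the first inclusion this yields $f\circ g\circ f(\cur C)=f(\cur C)$.

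The main obstacle is conceptual rather than computational: one must recognise precisely why the proposition requires $S$ to be in the range of $f$. A general element of $\BPPInexCom$ need not be closed under passing to subcurves, and indeed Lemma \ref{fdinverse} shows that $g$ can genuinely enlarge such an $S$ by adjoining subcurves sharing the same non-compact endpoint but having an earlier starting parameter. The subcurve-closure of $\cur C$ (condition 2 of the \bpp.) is exactly the ingredient that reabsorbs these newly created curves back into $f(\cur C)$; this ingredient is unavailable for an arbitrary element of $\BPPInexCom$, which is consistent with the earlier remark that $g$ is not surjective and that $f\circ g$ is not the identity on all of $\BPPInexCom$.
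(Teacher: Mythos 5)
Your proof is correct and follows essentially the same route as the paper's: both reduce to showing that any curve in $f\circ g\circ f(\cur C)$ is a subcurve of an element of $f(\cur C)\subset\cur C$ and hence lies in $\cur C$ by condition 2 of the \bpp., and both get the reverse inclusion from Lemma \ref{fdinverse}. The only cosmetic difference is that you invoke the second half of Lemma \ref{fdinverse} where the paper unpacks the definition of $g$ directly.
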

\begin{proof}
  We can calculate that,
  \begin{align*}
    f\circ g\circ f(\cur{C}) &= f\Bigl(\cleq{f(\cur{C})}\cup
      \cfil{\man{M}-P(f(\cur{C}))}\Bigr)\\
    &= \Bigl(\cleq{f(\cur{C})}\cup
      \cfil{\man{M}-P(f(\cur{C}))}\Bigr)
      \cap\InexCom\\
    &= \cleq{f(\cur{C})}\cap\InexCom.
  \end{align*}
  Let $\gamma\in f\circ g\circ f(\cur{C})$.
  Lemma \ref{fdinverse} implies that
  there exists $\delta\in f(\cur{C})$ so that $\gamma\leq\delta$.
  Since $\delta\in f(\cur{C})=\cur{C}\cap\InexCom$ we know that 
  $\delta\in\cur{C}$. By definition of the \bpp.\ we then know that
  $\gamma\in\cur{C}$. Since
  $\gamma\in f\circ g\circ f(\cur{C})\subset \InexCom$, Proposition \ref{prop:fdef}
  implies that $f\circ g\circ f(\cur{C})\subset f(\cur{C})$. From 
  Lemma \ref{fdinverse}
  we know that $f(\cur{C})\subset f\circ g\circ f(\cur C)$ as required.
\end{proof}
  
\begin{Prop}\label{prop:rel3}
  Let $S\in\BPPInexCom$ then $g\circ f\circ g(S)=g(S)$.
\end{Prop}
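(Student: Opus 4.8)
The plan is to put $T=f\circ g(S)$ and to observe that $g$ depends on its argument only through two derived objects: the subcurve closure $\{\delta:\exists\gamma\in S\ \delta<\gamma\}$ and the point set $S_p$. It therefore suffices to show that both objects are unchanged when $S$ is replaced by $T$. The engine driving this is lemma \ref{fdinverse}, which supplies the sandwich $S\subset T\subset\{\delta:\exists\gamma\in S\ \delta<\gamma\}$, the last inclusion following from the identity $f\circ g(S)=\{\delta:\exists\gamma\in S\ \delta<\gamma\}\cap\InexCom$ computed at the start of that proof.

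First I would record the two closure facts. Write $D(X)=\{\delta:\exists\gamma\in X\ \delta<\gamma\}$. The relation $<$ is reflexive and transitive, so $D$ is monotone and idempotent ($D(D(X))=D(X)$); hence the sandwich $S\subset T\subset D(S)$ forces $D(S)\subset D(T)\subset D(D(S))=D(S)$, i.e.\ $D(T)=D(S)$. Likewise, since a subcurve's image is contained in that of its parent, $D(S)_p=S_p$, and monotonicity of $X\mapsto X_p$ turns the sandwich into $S_p\subset T_p\subset D(S)_p=S_p$, so $T_p=S_p$.

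Finally I would substitute into the definition of $g$ from proposition \ref{gfunc}:
\[
  g(T)=D(T)\cup\cur{C}_\textnormal{\scriptsize Norm}(\man{M}-T_p)=D(S)\cup\cur{C}_\textnormal{\scriptsize Norm}(\man{M}-S_p)=g(S),
\]
which is exactly $g\circ f\circ g(S)=g(S)$.

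I do not expect a serious obstacle; once lemma \ref{fdinverse} is in hand the argument is pure bookkeeping. The one point deserving care is the normal-neighbourhood term $\cur{C}_\textnormal{\scriptsize Norm}(\man{M}-(\cdot)_p)$: its preservation rests precisely on the fact that passing from $S$ to $f\circ g(S)$ only adjoins subcurves of curves already in $S$, and such subcurves introduce no new image points, so the ``gap'' set $\man{M}-S_p$ that $\cur{C}_\textnormal{\scriptsize Norm}$ fills is left untouched. If one preferred to avoid the abstract closure language, the same conclusion follows by proving the two set equalities $D(T)=D(S)$ and $T_p=S_p$ directly by double inclusion, using $S\subset T$ for one direction and $T\subset D(S)$ together with transitivity of $<$ for the other.
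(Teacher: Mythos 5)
Your proposal is correct and follows essentially the same route as the paper: both reduce the claim to the two set equalities $\{\delta:\exists\gamma\in f\circ g(S)\ \delta<\gamma\}=\{\delta:\exists\gamma\in S\ \delta<\gamma\}$ and $\bigl(f\circ g(S)\bigr)_p=S_p$, using lemma \ref{fdinverse} and the fact that $\cur{C}_\textnormal{\scriptsize Norm}(\man{M}-S_p)\cap\InexCom=\EmptySet$, and then substitute into the definition of $g$. Your sandwich/idempotence packaging of the double inclusions is a mild streamlining, not a different argument.
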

\begin{proof}
  We first show that $P\bigl(f\circ g(S)\bigr)=P(S)$. Let $x\in 
  P(f\circ g(S))$
  then there exists $\gamma:[a,b)\to\man{M}\in f\circ g(S)$ so that
  $x\in\gamma([a,b))$. By Lemma \ref{fdinverse} we know that there
  exists $\delta:[p,b)\to\man{M}\in S$ so that $\gamma\leq\delta$.
  Hence $x\in\delta([p,q))$ and 
  $x\in P(S)$. Thus $P\bigl(f\circ g(S)\bigr)\subset P(S)$. 
  Again, from Lemma \ref{fdinverse} we know that
  $S\subset f\circ g(S)$ and therefore $P(S)\subset P\bigl(f\circ g(S)\bigr)$
  as required.
  
  We now show that 
  $
    \cleq{S}=
    \cleq{f\circ g(S)}.
  $
  From Lemma \ref{fdinverse} we know that $S\subset f\circ g(S)$ so that
  $
    \cleq{S}\subset
    \cleq{f\circ g(S)}.
  $
  Let $\delta\in \cleq{f\circ g(S)}$ then there
  exists $\gamma\in f\circ g(S)$ so that $\delta\leq\gamma$. Since $f\circ g(S)=g(S)\cap\InexCom$
  we know that $\gamma\in \cleq{S}\cup
  \cfil{\man{M}-P(S)}$
  and that $\gamma\in\InexCom$. By Definition \ref{Cnormdef} we know that
  $\InexCom\cap \cfil{\man{M}-P(S)}=\EmptySet$.
  Hence $\gamma\in\cleq{S}$. Thus $\delta\in\cleq{S}$ as required.
  
  These two results give us 
  $
    g\circ f\circ g(S)=\cleq{f\circ g(S)}
    \cup
    \cur{C}_\textnormal{\scriptsize Norm}
    \left(\man{M}-P\bigl(f\circ g(S)\bigr)\right)=
    \cleq{S}\cup
    \cfil{\man{M}-P(S)}=g(S),
  $
  as required.
\end{proof}

These results imply that  $g$ is injective when restricted to the range 
of $f$ and that $f$ is injective when restricted to the range of $g$. 
Neither function is surjective, however. 

\subsection{The problem with the standard algebra of sets on b.p.p.\ satisfying sets}\label{subsec.examples}

We are now in a position to describe why the standard algebra of sets, restricted
to $\BPPSet$, is not good enough to investigate changes of the 
classification of abstract boundary points when the b.p.p.\ satisfying set
of curves changes.

From above we know that a b.p.p.\ satisfying set $\cur{C}$ affects the classification of
abstract boundary points though the induced
division of the boundary $\partial\phi(\man{M})$,
of an envelopment $\phi\in\Env$, into the sets $\AppBound{\phi}{\cur{C}}$, 
$\AppUnbound{\phi}{\cur{C}}$ and $\Unapp{\phi}{\cur{C}}$.
Only the relative bounded / unboundedness of curves in $f(\cur{C})$ 
contribute to this division.
Condition \eqref{def.BPP.2} of Definition \ref{def.BPP} uses $\leq$. As mentioned after 
Definition \ref{def.curves} this relation distinguishes between $\gamma(t)$
and $\gamma(2t)$ where $\gamma$ is a curve in $M$, however no distinction
between $\gamma$ and $\gamma(2t)$ need be made for the division of $\partial(\man{M})$.
Condition (1) of 
Definition \ref{def.BPP} can be satisfied using precompact curves (as
can be seen from the definition of $g$). Proposition \ref{prop.goodtranslation}
shows that precompact curves don't contribute to the division of $\partial(\man{M})$.
The problems with the standard algebra of sets on $\BPPSet$ stem
from these two conditions. To illustrate this we provide three examples.

First, since the definition of the \bpp.\ refers to $\leq$, but only considers
the relative bounded / unboundedness of curves, we have the following
problem. Let $\cur{C}\in\BPPSet$ and let 
$\hat{\cur{C}}=\{\gamma(2t):\gamma(t)\in\cur{C}\}$.
Then it is the case that $\hat{\cur{C}}\in\BPPSet$ and
for all $\phi\in\Env$ that
$\AppBound{\phi}{\cur{C}}=\AppBound{\phi}{\hat{\cur{C}}}$ and
$\AppUnbound{\phi}{\cur{C}}=\AppUnbound{\phi}{\hat{\cur{C}}}$.
Thus 
$\cur{C}$ and $\hat{\cur{C}}$ give the same division of $\partial\phi(\man{M})$,
for all $\phi\in\Env$ yet, $\cur{C}\cap\hat{\cur{C}}=\varnothing$. Thus they
are not related by the standard algebra of sets, due to the reparameterization
of the curves in \cur{C}.

Second, since the definition of the b.p.p.\ can rely on non-precompact curves
we have the following. Given $\cur{C}\in\BPPSet$ and $\gamma\in\cur{C}- f(\cur{C})$
let $\cur{D}=\left(\cur{C}-{\gamma}\right)\cup\{\gamma(2t)\}$. By construction
$f(\cur{D})=f(\cur{C})$ and therefore 
$\AppBound{\phi}{\cur{C}}=\AppBound{\phi}{\cur{D}}$ and $\AppUnbound{\phi}{\cur{C}}=
\AppUnbound{\phi}{\cur{D}}$. However, $\cur{D}\not\subset\cur{C}$ and $\cur{C}\not\subset\cur{D}$.
That is, despite $\cur{C}$ and $\cur{D}$ giving the same division of $\partial\phi(\man{M})$,
for all $\phi\in\Env$, they are not related by inclusion. In addition, if there
exists $p\in P({\gamma})-P(\cur{C}-\{\gamma\})$ then $\cur{C}\cap\cur{D}\not\in\BPPInexCom$, despite 
$\gamma$ not contributing to the classification of boundary points as it is precompact.

Third, again as the definition of the b.p.p.\ can rely on non-precompact curves,
we have the following. 
For this example we create an alternative to $g$ by changing our choice of the
curves $\gamma_p$.
We can choose different vectors, for example $\hat{v}_p\in T_p\man{M}$,
to produce a different collection of geodesics
$\hat{\gamma}_p:[-\hat{\epsilon}_p,
\hat{\epsilon}_p)\to \man{M}$ so that $\hat{\gamma}_p(0)=p$ and 
$\hat{\gamma}'_p(0)=\hat{v}_p$. By replacing the $\gamma_p$'s by the
$\hat{\gamma}_p$'s we can use Definitions 
\ref{Cnormdef} and \ref{curves less than} and Proposition \ref{gfunc}
to construct a function $\hat g:\BPPInexCom\to\BPPSet$. 
Lemma \ref{fdinverse} and Propositions \ref{prop.goodtranslation},
\ref{lem.fdinvers} and 
\ref{prop:rel3} remain true once $g$ is replaced by $\hat g$ and therefore
$\hat g$ is a viable alternative to $g$ despite $\hat g\neq g$.
None-the-less, for any $S\in\BPPInexCom$, 
it is the case that 
$\AppBound{\phi}{g({S})}=\AppBound{\phi}{\hat{g}({S})}$ and
$\AppUnbound{\phi}{g({S})}=\AppUnbound{\phi}{\hat{g}({S})}$.
Hence $g(S)$ and $\hat{g}(S)$ induce the same division of $\partial\phi(\man{M})$,
for all $\phi\in\Env$. If $\man{M}-P(S^\leq)\neq\EmptySet$ then
$g(S)\not\subset \hat{g}(S)$, $\hat{g}(S)\not\subset g(S)$
and $g(S)\cap \hat{g}(S)=S^\leq\not\in\BPPSet$,
due to the differences in the construction of $g$ and $\hat{g}$. These
differences
only concern precompact curves: curves which have no effect on the classification
of boundary points.

\section{Creating a one-to-one correspondence}\label{Sc one-to-one correspondences}

Propositions \ref{lem.fdinvers}, \ref{prop:rel3} and Section \ref{subsec.examples} suggest that
the sets $\BPPInexCom$ and $\BPPSet$ contain more information than 
needed for the abstract boundary classification. This suggests that we should
define an equivalence relation.

\subsection{A curve based equivalence relation}
\label{ssec:ambiguity}

Since it is the elements of $f(\cur{C})$, $\cur{C}\in\BPPSet$, which
influence the division of $\partial\phi(M)$, for any $\phi\in\Env$,
we start by defining an equivalence relation on $\BPPInexCom$. 

Given $S,P\in\BPPInexCom$, as only the relative bounded/ un-boundedness
of pairs of curves is important, we design our equivalence relation so that
$S$ is equivalent to $P$ if and only if the images of all curves 
in $S$ is equal to the images of all curves in $P$ and that the boundedness and unboundedness
of curves that eventually have the same image are the same.

\begin{Def}\label{def.equiv2}
	Define an equivalence relation $\simeq_{\textnormal c}$ on the set $\BPPInexCom$ by $S\simeq_{\textnormal c} P$ if and only if
	\[
		\forall\gamma:[a,b)\to\man{M}\in S,\ \exists\delta:[p,q)\to\man{M}\in P,\ c,r\in\mathbb{R}
	\] 
	so that $\gamma([c,b))=\delta([r,q))$ and either both $\gamma,\delta$ are bounded or unbounded,
	and,
	\[
		\forall\delta:[p,q)\to\man{M}\in P,\ \exists\gamma:[a,b)\to\man{M}\in S,\ r,c\in\mathbb{R}
	\] so that $\delta([r,q))=\gamma([c,b))$ and either both $\delta,\gamma$ are bounded or unbounded.
	We shall write the equivalence class of $S$ by $[S]_{\textnormal c}$.
\end{Def}
    
\begin{Lem}
	The equivalence relation $\simeq_{\textnormal c}$ on $\BPPInexCom$ is well defined.
\end{Lem}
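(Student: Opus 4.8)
The plan is to show that $\simeq_2$ is an equivalence relation on $\BPPInexCom$, i.e.\ to verify reflexivity, symmetry and transitivity; I expect the first two to be immediate and essentially all of the work to sit in transitivity. For reflexivity I would, given $S\in\BPPInexCom$ and $\gamma:[a,b)\to\man{M}\in S$, take the witness $\delta=\gamma$ with $c=r=a$, so that $\gamma([a,b))=\gamma([a,b))$ holds trivially and the two (identical) curves are vacuously either both bounded or both unbounded; the same witness handles the second clause of Definition \ref{def.equiv2}, giving $S\simeq_2 S$. For symmetry I would simply note that Definition \ref{def.equiv2} is written as a conjunction of two clauses that are interchanged by swapping $S$ and $P$, and that both the image equality $\gamma([c,b))=\delta([r,q))$ and the ``both bounded or both unbounded'' dichotomy are symmetric in the two curves; hence $S\simeq_2 P$ reads word-for-word as $P\simeq_2 S$.

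The substance is transitivity. Assuming $S\simeq_2 P$ and $P\simeq_2 Q$, I would use the symmetry just established to reduce to verifying only the first clause of $S\simeq_2 Q$ (the second being the identical argument with the roles of $S$ and $Q$ exchanged). So I would fix $\gamma:[a,b)\to\man{M}\in S$, apply $S\simeq_2 P$ to obtain $\delta:[p,q)\to\man{M}\in P$ with cutoffs $c,r$ satisfying $\gamma([c,b))=\delta([r,q))$ and matching boundedness, and then apply $P\simeq_2 Q$ to this $\delta$ to obtain $\eta:[u,w)\to\man{M}\in Q$ with cutoffs $c_1,r_1$ satisfying $\delta([c_1,q))=\eta([r_1,w))$ and matching boundedness. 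The boundedness bookkeeping should then chain with no trouble, since $\gamma$ agrees in boundedness with $\delta$ and $\delta$ with $\eta$. The genuine difficulty is with the images: the $S$--$P$ matching records the tail of $\delta$ from $r$, whereas the $P$--$Q$ matching records the tail from $c_1$, and in general these are different tails of the intermediate curve $\delta$.

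To reconcile them the plan is to pass to the single further tail $\delta([m,q))$ with $m=\max(r,c_1)$, and then argue that this tail pulls back to honest tails $\gamma([c_2,b))$ and $\eta([r_2,w))$, yielding $\gamma([c_2,b))=\eta([r_2,w))$ as needed. This pull-back is where I expect the real obstacle to lie, and it is precisely where the hypotheses built into Definition \ref{def.overline} should earn their keep: the discarded piece $\delta([r,m))$ is the continuous image of the compact interval $[r,m]$ and so has compact closure, whereas $\gamma$ and $\eta$ belong to $\InexCom$ and hence have non-compact closure. Using Proposition \ref{prop.Compact} to separate the compact ``initial'' portion of the shared image from its non-compact tail, I would try to show that a sufficiently far tail of $\gamma$, and likewise of $\eta$, avoids the discarded compact segment and therefore coincides with $\delta([m,q))$. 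The delicate case to rule out is a tail that re-enters the discarded compact segment arbitrarily late (an oscillating or self-intersecting tail); the cleanest device for this, and the one anticipated by the footnote to Definition \ref{def.BPP}, is to read the image equalities through the reparametrisation $s$ of Definition \ref{def.overline}, so that passing to a further tail of $\delta$ merely restricts $s$ and transports transparently to tails of $\gamma$ and $\eta$ while preserving the boundedness dichotomy. Once the forward clause is established this way, symmetry delivers the reverse clause and hence $S\simeq_2 Q$.
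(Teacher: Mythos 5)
Your overall architecture coincides with the paper's: reflexivity via the witness $\delta=\gamma$, symmetry from the symmetric form of Definition \ref{def.equiv2}, and transitivity by chaining a witness $\delta\in P$ between $\gamma\in S$ and a curve of $Q$, with the boundedness dichotomy passing through the intermediate curve. The step you single out as the genuine difficulty --- replacing the two different tails $\delta([r,q))$ and $\delta([s,q))$ of the intermediate curve by a common one and pulling it back to a tail of $\gamma$ --- is exactly the step the paper dispatches in one sentence (``without loss of generality assume $r<s$, then \ldots there must exist $d\in\mathbb{R}$ so that $\gamma([d,b))=\delta([s,q))$''), so you have correctly located where the content of the lemma sits.

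However, your proposed resolution of that step does not go through as described. The hypothesis $S\simeq_2 P$ hands you only the set-theoretic equality $\gamma([c,b))=\delta([r,q))$; it does not supply a change of parameter relating $\gamma$ to $\delta$. The reparametrisation of Definition \ref{def.overline} is likewise unavailable here: that condition constrains pairs of curves lying in the \emph{same} element of $\BPPInexCom$, whereas $\gamma\in S$ and $\delta\in P$ live in different sets. Proposition \ref{prop.Compact} also does not do what you want --- it characterises curves whose whole closure is compact and gives no device for separating a compact initial portion of a shared image from its non-compact tail. So the oscillating or self-intersecting tail you worry about is not ruled out by the tools you invoke. To be fair, the paper's own assertion that the further tail $\delta([s,q))$ is of the form $\gamma([d,b))$ is offered with no more justification than yours; if $\gamma$ revisits early parts of the common image arbitrarily late, no tail of $\gamma$ need equal $\delta([s,q))$. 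You have identified a real pressure point rather than invented one, but neither your sketch nor the paper's one-liner closes it; doing so requires either an extra hypothesis on the curves or a direct construction of a pair of tails of the first and third curves with equal images.
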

\begin{proof}
	Let $S\in\BPPInexCom$ then as $\gamma\leq\gamma$ for all $\gamma\in S$ 
	we can see that $S\simeq_{\textnormal c} S$.
	It is clear that the symmetry of $\simeq_{\textnormal c}$ is satisfied by definition.
	
	Suppose that $S,P,Q\in\BPPInexCom$ are such that $S\simeq_{\textnormal c} P$ and 
	$P\simeq_{\textnormal c} Q$.  Let $\gamma:[a,b)\to\man{M}\in S$
	then there exists $\delta:[p,q)\to\man{M}\in P$ and $c,r\in\mathbb{R}$ 
	so that $\gamma([c,b))=\delta([r,q))$ and either both
	$\gamma$ and $\delta$ are bounded or unbounded.  Likewise as 
	$\delta\in P$ there exists $\mu:[u,v)\to\man{M}\in Q$ and $s,w\in\mathbb{R}$
	so that $\delta([s,q))=\mu([w,v))$ and either both are bounded or unbounded. 
	Without loss of generality
	assume that $r<s$, then as $\gamma([c,b))=\delta([r,q))$ there must exist 
	$d\in\mathbb{R}$ so that 
	$\gamma([d,b))=\delta([s,q))=\mu([w,v))$.  If $\gamma$ is bounded then 
	$\delta$ must be bounded and therefore $\mu$ must also be bounded.  
	Likewise, if $\gamma$ is unbounded then $\mu$ must be unbounded. 
	A similar argument can be applied to any $\gamma\in Q$,
	hence $S\simeq_{\textnormal c} Q$.
	
	Therefore $\simeq_{\textnormal c}$ is well defined.
\end{proof}
    
\begin{Lem}\label{lem. simeeq 2 equiv}
	Let $S\in\BPPInexCom$ then $S\simeq_{\textnormal c} f\circ g(S)$. 
\end{Lem}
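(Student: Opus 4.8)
The plan is to deduce this almost entirely from Lemma \ref{fdinverse}, which already records exactly how $f\circ g(S)$ differs from $S$: we have $S\subset f\circ g(S)$, and any curve in $f\circ g(S)-S$ is a subcurve of a curve in $S$ that shares its right endpoint and differs only at the left. Unwinding the definition of $\simeq_2$ (definition \ref{def.equiv2}), I must verify two quantified conditions, and the forward one is immediate: since $S\subset f\circ g(S)$, every $\gamma:[a,b)\to\man{M}\in S$ is itself a member of $f\circ g(S)$, so I take $\delta=\gamma$ and $c=r=a$, whence $\gamma([c,b))=\delta([r,b))$ trivially and the curves are literally equal, hence agree in boundedness.

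The substance is the reverse condition. Given $\mu:[a,b)\to\man{M}\in f\circ g(S)$, I split on whether $\mu\in S$. If $\mu\in S$, I again match it with itself. Otherwise $\mu\in f\circ g(S)-S$, and Lemma \ref{fdinverse} supplies $\gamma:[p,q)\to\man{M}\in S$ with $\mu<\gamma$, $\mu\neq\gamma$, and crucially $q=b$ and $p<a$. Thus $\mu=\gamma|_{[a,b)}$, so choosing $c=a$ and $r=a$ gives $\mu([a,b))=\gamma([a,b))=\gamma([a,q))$, which is precisely the tail-image equality demanded by $\simeq_2$. For the boundedness clause I observe that $\mu$ is bounded if and only if $b<\infty$, and $\gamma$ is bounded if and only if $q<\infty$; since $q=b$ the two curves are simultaneously bounded or unbounded. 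This discharges the reverse condition and hence $S\simeq_2 f\circ g(S)$.

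The step requiring the most care is recognising that $\simeq_2$ asks for two things at once—agreement of tail images \emph{and} agreement of boundedness—and that both are delivered by the single conclusion ``$q=b$'' of Lemma \ref{fdinverse}. I would emphasise this because it is the precise form of that lemma that makes the argument work: the extra curves produced by $g$ (and retained by $f$) only ever trim a \emph{compact initial segment} from a curve of $S$ without touching its far end, so neither the eventual image nor the finiteness of the parameter can change. Were the subcurve relation allowed to alter the right endpoint, the boundedness clause could fail, so I would flag that the non-compactness of the curves in $\InexCom$, together with proposition \ref{prop.Compact}, is what forces $q=b$ in the first place. Beyond this observation the proof is a short two-case bookkeeping argument and I anticipate no genuine obstacle.
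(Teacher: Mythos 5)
Your proof is correct and takes exactly the route the paper intends: the paper's own proof of this lemma is simply the one-line remark that it follows from Lemma \ref{fdinverse}, and your two-case argument (matching each curve of $S$ with itself, and matching each $\mu\in f\circ g(S)-S$ with the curve $\gamma\in S$ supplied by that lemma, using $q=b$ for both the tail-image and boundedness clauses) is the natural fleshing-out of that remark.
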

\begin{proof}
	This follows from Lemma \ref{fdinverse} and Definition \ref{def.equiv2}.
\end{proof}
    
\begin{Def}\label{def.secondinducedequivalence}
	Define $\cur{C},\cur{D}\in \BPPSet$ to 
	be equivalent, denoted $\cur{C}\approx_{\textnormal c}\cur{D}$, if and only if $f(\cur{C})\simeq_{\textnormal c} f(\cur{D})$.  It is clear 
	that this provides a well-defined equivalence relation. Denote the equivalence class of $\cur{C}$ by $[\cur{C}]_{\textnormal c}$.
\end{Def}

The correspondence induced by $f$ and $g$ under $\simeq_{\textnormal c}$ and $\approx_{\textnormal c}$
is one-to-one.
    
\begin{Prop}\label{cequivprop}
	The induced functions $${f_{\textnormal c}}:\frac{BPP(\man{M})}{\approx_{\textnormal c}}\to\frac{\BPPInexCom}{\simeq_{\textnormal c}}$$
	and $${g_{\textnormal c}}:\frac{\BPPInexCom}{\simeq_{\textnormal c}}\to\frac{BPP(\man{M})}{\approx_{\textnormal c}}$$
	are bijective and mutually inverse.
\end{Prop}
\begin{proof}
	We first show that ${f_{\textnormal c}}$ and ${g_{\textnormal c}}$ are well defined.			
	Let $\cur{C}\approx_{\textnormal c}\cur{D}$, we must show that $f(\cur{C})\simeq_{\textnormal c} f(\cur{D})$.  			
	This is true by definition and
	therefore ${f_{\textnormal c}}$ is well-defined. Now let $S\simeq_{\textnormal c} P$,
	from Lemma \ref{lem. simeeq 2 equiv} we know that $f\circ g(S)\simeq_{\textnormal c} f\circ g(P)$. By Definition
	\ref{def.secondinducedequivalence} this implies that $g(S)\approx_{\textnormal c} g(P)$. Thus $g_{\textnormal c}$ is well defined.
	
	We now show that ${f_{\textnormal c}}\circ{g_{\textnormal c}}([S]_{\textnormal c})=[S]_{\textnormal c}$ and ${g_{\textnormal c}}\circ{f_{\textnormal c}}([\cur{C}]_{\textnormal c})=[\cur{C}]_{\textnormal c}$. 
	Let $[\cur{C}]_{\textnormal c}\in\frac{BPP(\man{M})}{\approx_{\textnormal c}}$ and note that ${f}\circ{g}\circ{f}(\cur{C})={f}(\cur{C})$
	by Proposition \ref{lem.fdinvers}.
	By Definition \ref{def.secondinducedequivalence} we know that $g\circ f(\cur{C})\approx_{\textnormal c}\cur{C}$ or rather that 
	${g_{\textnormal c}}\circ{f_{\textnormal c}}([\cur{C}]_{\textnormal c})=[\cur{C}]_{\textnormal c}$.
	Let $S\in\BPPInexCom$ then by Lemma \ref{lem. simeeq 2 equiv} we know $S\simeq_{\textnormal c} f\circ g(S)$. That is
	$f_{\textnormal c}\circ g_{\textnormal c}([S]_{\textnormal c})=[S]_{\textnormal c}$ as required. This is sufficient
	to prove that $f_{\textnormal c}$ and $g_{\textnormal c}$ are bijective and mutually inverse as required.
\end{proof}

Thus by removing the `additional information' introduced by $\leq$ and illuminated by
the 
construction of $g$ we can put $\BPPInexCom$ and $\BPPSet$ into a one-to-one 
correspondence. That is, every element of $\BPPSet$ is in the equivalence
class, under $\approx_{\textnormal c}$, of the image, under $g$, of some element of $\BPPInexCom$.

Note, however, that these equivalence relations do not ensure that 
$\cur{C}\not\approx_{\textnormal c}\cur{D}$ implies 
$$\AppBound{\phi}{\cur{C}}\neq\AppBound{\phi}{{\cur{D}}}$$ and
$$\AppUnbound{\phi}{\cur{C}}\neq\AppUnbound{\phi}{{\cur{D}}}.$$ An explicit
example can be constructed using the two classes of null 
geodesics and one of the
two maximal embeddings 
of the Misner spacetime (for $0<t<\infty$) discussed in Section 5.8 of
\cite{HawkingEllis1973}.

 \begin{Ex}\label{example}
   Let $\man{M}=\mathbb{R}^+\times S^1$ be the Misner spacetime, 
   \cite{HawkingEllis1973}, given in the coordinates $0<t<\infty$ and 
   $0\leq\psi< 2\pi$. 
   With
   respect to these coordinates we have $ds^2=-t^{-1}dt^2+td\psi^2$. 
   Let $\phi:\man{M}\to\mathbb{R}\times S^1$ be defined by
   $\phi(t,\psi)=(t,\psi-\log(t))$. This is one of the two maximal 
   extensions of $\man{M}$. 
   Letting $\psi' = \psi-\log(t)$ the metric becomes 
   $ds^2=2d\psi'dt+t\left(d\psi'\right)^2$.
   With respect to this extension we are able to select two sets of 
   affinely parametrised
   null geodesics lying in $\man{M}$: those that
   run vertically and those that 
   approach but never reach the waist $t=0$. 
   
   Let $\cur{C}_{v}$ 
   be the set of all affinely parametrised null geodesics so that $\phi(\cur{C}_v)$ is the
   set of vertical null geodesics in the extension given by $\phi$. Likewise let $\cur{C}_w$ 
   be the set of all affinely parametrised null geodesics so that $\phi(\cur{C}_w)$ is the    
   set of null geodesics that approach but do not reach the waist. 
   Thus $\gamma\in \cur{C}_v$ 
   is given by $\phi\circ\gamma(\tau)=(a_0\tau+b_0,b_1)$ for $a_0,b_0\in\mathbb{R}$
   and $0\leq b_1<2\pi$ where $\tau\in(-\frac{b_0}{a_0},\infty)$. Similarly $\gamma\in \cur{C}_w$ is given by 
   $\phi\circ\gamma(\tau)=\left(a_0\tau+b_0,-2\log(a_0\tau+b_0)+b_1)\right)$ for $a_0\in\mathbb{R}^+,b_0\in\mathbb{R}$ and $0\leq b_1<2\pi$ where $\tau\in\left(-\frac{b_0}{a_0},\infty\right)$, \cite{Whale2010}. These formulas and
   Example 5 of \cite{ScottSzekeres1994} imply that $\cur{C}_v,\cur{C}_W\in\BPPSet$.

   Let $\gamma\in\cur{C}_v$ and $\mu\in\cur{C}_w$. The equation $\phi\circ\gamma(\tau)=\phi\circ\mu(\tau')$ has the solution
   \begin{align*}
     \tau_k &= \frac{1}{a_0}\left(\exp\left(\frac{b_1-b_1'+k\pi}{-2}\right)-b_0\right)\\
     \tau_k' &=\frac{1}{a_0'}\left(a_0\tau_k+b_0-b_0'\right),
   \end{align*}
   for each $k\in\mathbb{Z}$. As $k\to\infty$ we see that $\tau_k\to\frac{-b_0}{a_0}$ and $\tau_k'\to\frac{-b_0'}{a_0'}$.
   The sequence $\{(t_k,\psi'_k)=\phi\circ\gamma(\tau_k)=\phi\circ\mu(\tau_k')\}$ has, 
   as $k\to\infty$, the end point $(0,b_1)$, as expected.
   As $\gamma$ and $\mu$ are arbitrary, and $\phi$ is a diffeomorphism, this calculation implies 
   that given $\psi\in\Env$, a curve $\gamma\in\cur{C}_v$ has $p\in\partial\psi(\man{M})$ as a limit point if and only if
   every curve in $\cur{C}_w$ has $p$ as a limit point and that $\mu\in\cur{C}_w$ has $p\in\partial\psi(\man{M})$ as an endpoint if and only if
   every curve in $\cur{C}_v$ has $p$ as a limit point.
   
   Let $\psi\in\Env$.
   Since every curve in $\cur{C}_v$ and $\cur{C}_w$ has bounded 
   parameter we know that $\AppUnbound{\psi}{\cur{C}_v}=\AppUnbound{\psi}{\cur{C}_w}=\varnothing$.
   From the 
   calculation above we know that $p\in\AppBound{\psi}{\cur{C}_v}$ if and only if $p\in\AppBound{\psi}{\cur{C}_w}$. 
   That is, $\AppBound{\psi}{\cur{C}_v}=\AppBound{\psi}{\cur{C}_w}$.
   Hence both $\cur{C}_v$ and $\cur{C}_w$ induce the same division of $\partial\psi(\man{M})$ for
   any $\psi\in\Env$. By construction, however, $\cur{C}_c\not\approx_c\cur{C}_w$.
 \end{Ex}
 
	This implies that we should look for a second pair of equivalence relations.

\subsection{A boundary based equivalence relation}
\label{The Obvious Equivalence Relation}

Equating elements of $\BPPSet$ that produce the same 
division of $\partial\phi(M)$, for any $\phi\in\Env$,
will ensure that each equivalence class gives a 
unique classification of abstract boundary points.
      
\begin{Def}\label{approx 1}
  Let $\approx_{\textnormal b}$ be the equivalence relation on $\BPPSet$ given by 
  $\cur{C}\approx_{\textnormal b}\cur{D}$ if and only if
  for all $\phi\in\Env$, $\AppUnbound{\phi}{\cur{C}}=\AppUnbound{\phi}{\cur{D}}$ and 
  $\AppBound{\phi}{\cur{C}}=\AppBound{\phi}{\cur{D}}$.
  We denote the equivalence class of $\cur{C}$ by $[\cur{C}]_{\textnormal b}$.
\end{Def}

We will also need to ensure that $f(\cur{C})$ is equated with
$f(\cur{D})$ when $\cur{C}\approx_{\textnormal b}\cur{D}$.

\begin{Def}\label{def.first Equivalence Relation Induced}
  Let $\simeq_{\textnormal b}$ be the equivalence relation on $\BPPInexCom$ given by 
  $S\simeq_{\textnormal b} P$ if and only if $g(S)\approx_{\textnormal b} g(P)$.
  We denote the equivalence class of $S$ by $[S]_{\textnormal b}$.
\end{Def}

Both equivalence relations are clearly well defined.

We need the following lemma before proving that the correspondence induced by 
$f$ and $g$ under $\simeq_{\textnormal b}$ and $\approx_{\textnormal b}$
is one-to-one.

\begin{Lem}\label{lem.gcircfEquivalence}
	Let $\cur{C}\in\BPPSet$ then $\cur{C}\approx_{\textnormal b}g\circ f(\cur{C})$.
\end{Lem}
\begin{proof}
  Let $\cur{D}=g\circ f(\cur{C})$.
	From Proposition \ref{lem.fdinvers} we know that $f(\cur{D})=f(\cur{C})$. Thus 
	$\cur{D}\cap\InexCom=\cur{C}\cap\InexCom$. 	
	Let $p\in \App{\phi}{\cur{C}}$
	then there exists $\gamma\in \cur{C}$
	so that $p$ is a limit point of $\phi\circ\gamma$.
	Since 
	$p\in\partial\phi(\man{M})$ the curve $\gamma$ is non-precompact.
	Therefore $\gamma\in\cur{C}\cap\InexCom$. This implies that
	$\gamma\in\cur{D}$ and hence that $p\in\App{\phi}{\cur{D}}$.
	This argument can be applied to $p\in\App{\phi}{\cur{D}}$ to show that
	\[
	  \left\{\gamma\in\cur{C}:p\text{ is a limit point of }\phi\circ\gamma\right\}
	  =
	  \left\{\gamma\in\cur{D}:p\text{ is a limit point of }\phi\circ\gamma\right\}
	  .
	\]
	This implies, for all $\phi\in\Phi$, that 
	$\AppUnbound{\phi}{\cur{C}}=\AppUnbound{\phi}{\cur{D}}$ and
	$\AppBound{\phi}{\cur{C}}=\AppBound{\phi}{\cur{D}}$.
	Thus $\cur{C}\approx_{\textnormal b}f\circ g(\cur{C})$.
\end{proof}

\begin{Thm}
	The induced functions $f_{\textnormal b}:\frac{\BPPSet}{\approx_{\textnormal b}}\to\frac{\InexCom}{\simeq_{\textnormal b}}$ and 
	$g_{\textnormal b}:\frac{\InexCom}{\simeq_{\textnormal b}}\to\frac{\BPPSet}{\approx_{\textnormal b}}$ are bijective and mutually inverse.
\end{Thm}
\begin{proof}
	We need to show that $f_{\textnormal b}$ and $g_{\textnormal b}$ are well-defined.			
	Suppose that $\cur{C}\approx_{\textnormal b}\cur{D}$ we need to show that 
	$f(\cur{C})\simeq_{\textnormal b} f(\cur{D})$. That is we need to show that
	$g\circ f(\cur{C})\approx_{\textnormal b} g\circ f(\cur{D})$. From Lemma 
	\ref{lem.gcircfEquivalence}, however, we know that
	$g\circ f(\cur{C})\approx_{\textnormal b}\cur{C}\approx_{\textnormal b}\cur{D}\approx_{\textnormal b}
	 g\circ f(\cur{D})$ as required. 
	Likewise, suppose that $S\simeq_{\textnormal b} P$ we need to show that 
	$g(S)\approx_{\textnormal b} g(P)$, but this follows directly from Definition 
	\ref{def.first Equivalence Relation Induced}. Hence $f_{\textnormal b}$ and 
	$g_{\textnormal b}$ are well defined.
	
	We now show that $f_{\textnormal b}\circ g_{\textnormal b}([S]_{\textnormal b})=[S]_{\textnormal b}$ and 
	$g_{\textnormal b}\circ f_{\textnormal b}([\cur{C}]_{\textnormal b})=[\cur{C}]_{\textnormal b}$.
	Let $[\cur{C}]_{\textnormal b}\in\frac{\BPPSet}{\approx_{\textnormal b}}$, from 
	Lemma \ref{lem.gcircfEquivalence} we know that
	$g\circ f(\cur{C})\approx_{\textnormal b} \cur{C}$ so that
	$
					g_{\textnormal b}\circ f_{\textnormal b}([\cur{C}]_{\textnormal b})
									=[\cur{C}]_{\textnormal b},
	$
	as required.
	Let $S\in\BPPInexCom$. From Lemma \ref{lem.gcircfEquivalence} we see that
	$g\circ f\circ g(S)\approx_{\textnormal b} g(S)$. 
	By Definition \ref{def.first Equivalence Relation Induced}
	this implies that $f\circ g(S)\simeq_{\textnormal b} S$ or that 
	$f_{\textnormal b}\circ g_{\textnormal b}([S])=[S]_{\textnormal b}$ as required.
	Therefore $f_{\textnormal b}$ and $g_{\textnormal b}$ are both bijective are mutually inverse.
\end{proof}

While $\approx_{\textnormal b}$ appropriately encodes when two
b.p.p.\ satisfying sets of curves produce the same 
division of $\partial\phi(M)$, for any $\phi\in\Env$, it is substantially harder to work with than $\approx_{\textnormal c}$.
Given $\cur{C},\cur{D}\in\BPPSet$, in order to determine if
$\cur{C}\approx_{\textnormal b}\cur{D}$ we need to check if 
$\AppUnbound{\phi}{\cur{C}}=\AppUnbound{\phi}{\cur{D}}$ and 
  $\AppBound{\phi}{\cur{C}}=\AppBound{\phi}{\cur{D}}$ for all
$\phi\in\Phi(\man{M})$. This is, in general, extremely hard.
The relation $\approx_{\textnormal c}$ is much easier to do calculations with. It is
fortunate then that $\approx_{\textnormal c}$ and $\approx_{\textnormal b}$ are closely related.

 \begin{Prop}\label{relation between equivalence relations}
   Let $\cur{C},\cur{D}\in \BPPSet$ then $\cur{C}\approx_{\textnormal c}\cur{D}$ 
   implies that 
   $\cur{C}\approx_{\textnormal b}\cur{D}$.
 \end{Prop}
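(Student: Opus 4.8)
The plan is to unwind both equivalences through the subset of non-compact-closure curves and then carry limit points across the $\simeq_2$ matching. First I would note that, by definition \ref{def.secondinducedequivalence}, $\cur{C}\approx_2\cur{D}$ means exactly $f(\cur{C})\simeq_2 f(\cur{D})$, where $f(\cur{C})=\cur{C}\cap\InexCom$. The crucial reduction is that $\App{\phi}{\cur{C}}$ and $\AppBound{\phi}{\cur{C}}$ depend only on $f(\cur{C})$: if a boundary point $p\in\partial\phi(\man{M})$ is a limit point of $\phi\circ\gamma$ for some $\gamma\in\cur{C}$, then $p\in\partial\phi(\man{M})\cap\overline{\phi\circ\gamma}\neq\EmptySet$, so proposition \ref{prop.Compact} forces $\overline{\gamma}$ to be non-compact, i.e. $\gamma\in f(\cur{C})$. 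Hence $p$ is approachable (respectively approachable with bounded parameter) from $\cur{C}$ if and only if some curve of $f(\cur{C})$ (respectively some \emph{bounded} curve of $f(\cur{C})$) has $p$ as a limit point.

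Second, I would establish the elementary but essential observation that a boundary limit point is always reached along the tail of a curve. Given $\gamma:[a,b)\to\man{M}$ with $p\in\partial\phi(\man{M})$ a limit point of $\phi\circ\gamma$, choose $t_i\in[a,b)$ with $\phi(\gamma(t_i))\to p$. Since $p\notin\phi(\man{M})$, the sequence $\{\gamma(t_i)\}$ can have no limit point in $\man{M}$ (a limit point $x$ would yield $\phi(x)=p\in\phi(\man{M})$, a contradiction), and this excludes any subsequence of $\{t_i\}$ lying in a compact subinterval of $[a,b)$; therefore $t_i\to b$. Consequently, for every $c\in[a,b)$ the point $p$ remains a limit point of $\phi$ restricted to the tail image $\gamma([c,b))$.

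With these two facts the transfer is immediate. If $p\in\App{\phi}{\cur{C}}$, pick a witness $\gamma\in f(\cur{C})$ and its $\simeq_2$-partner $\delta\in f(\cur{D})$, with domain having right endpoint $q$, such that $\gamma([c,b))=\delta([r,q))$ for suitable $c,r$; by the tail observation $p$ is a limit point of $\phi$ on $\gamma([c,b))=\delta([r,q))$, which is contained in the image of $\delta$, so $p\in\App{\phi}{\cur{D}}$. The reverse inclusion is symmetric, giving $\App{\phi}{\cur{C}}=\App{\phi}{\cur{D}}$ for all $\phi\in\Phi$. For the bounded sets I would repeat the argument taking $\gamma$ bounded; the boundedness clause of definition \ref{def.equiv2} then forces $\delta$ bounded as well, and the tail argument again exhibits $p$ as a limit point of the bounded curve $\delta$, so $\AppBound{\phi}{\cur{C}}=\AppBound{\phi}{\cur{D}}$. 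Together these are precisely the conditions defining $\cur{C}\approx_1\cur{D}$.

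The main obstacle I anticipate is the second step: $\simeq_2$ only equates curves after discarding a finite portion of their domains, so the argument succeeds only because a boundary point is genuinely approached at the parameter endpoint $b$. I would need to argue $t_i\to b$ uniformly in the cases $b<\infty$ and $b=\infty$, and to verify that passing to a tail preserves boundedness (a bounded curve has bounded tails), so that the bounded-parameter witness survives the matching.
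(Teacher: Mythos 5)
Your proposal is correct and follows essentially the same route as the paper's own proof: reduce $\approx_2$ to $f(\cur{C})\simeq_2 f(\cur{D})$, note via Proposition \ref{prop.Compact} that any curve witnessing approachability lies in $f(\cur{C})$, and transfer the limit point to the matching curve in $f(\cur{D})$, using the boundedness clause of Definition \ref{def.equiv2} for the $\AppBound{\phi}{\cdot}$ case. The only difference is that you explicitly justify the step the paper leaves implicit, namely that $t_i\to b$ so the limit point survives passage to the common tail $\gamma([c,b))=\delta([r,q))$; this is a welcome addition rather than a departure.
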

 \begin{proof}
   Suppose that $\cur{C}\approx_{\textnormal c}\cur{D}$ and let $p\in\App{\phi}{\cur{C}}$. 
   Then there exists
   $\gamma\in\cur{C}$ so that $p$ is an accumulation point of 
   $\phi\circ \gamma$.  Since $p\in\partial\phi(\man{M})$ it is  the
   case that $\gamma\in f(\cur{C})$. Since $\cur{C}\approx_{\textnormal c}\cur{D}$ we 
   know that $f(\cur{C})\simeq_{\textnormal c} f(\cur{D})$. Hence there exists
   $\delta\in f(\cur{D})$ so that the images of $\gamma$ and $\delta$ agree, 
   except on some compact portion.  Thus
   $p$ is an accumulation point of $\phi\circ\delta$ and therefore 
   $p\in\App{\phi}{\cur{D}}$. Moreover, if $\gamma$ is bounded (unbounded)
   then, by Definition
   \ref{def.equiv2}, $\delta$ must also be bounded (unbounded). Therefore 
   $\AppBound{\phi}{\cur{C}}\subset\AppBound{\phi}{\cur{D}}$
   and $\AppUnbound{\phi}{\cur{C}}\subset\AppUnbound{\phi}{\cur{D}}$.
   The same argument can be applied
   to $p\in\App{\phi}{\cur{D}}$ and therefore 
   $\cur{C}\approx_{\textnormal b}\cur{D}$.
 \end{proof}

Thus we have that $\approx_{\textnormal c}\subset\approx_{\textnormal b}$.
The example mentioned after Proposition \ref{cequivprop} demonstrates that
$\approx_{\textnormal b}\not\subset\approx_{\textnormal c}$.

\section{A generalized algebra of sets on the set of all b.p.p. satisfying sets of curves}\label{sec:realtions}

To remedy the problems described in Section \ref{subsec.examples} we
now give a generalization of the usual algebra of sets on $\BPPSet$ using the results
of Sections \ref{sec.main} and \ref{Sc one-to-one correspondences}. Our aim
is to produce a new algebra of sets that has a stronger connection with
the division of $\partial\phi(M)$ induced by elements of $\BPPSet$.

We use $\simeq_{\textnormal b}$ and $\approx_{\textnormal b}$ below. The same results hold, with the
exclusion of the ``if'' part of Proposition \ref{prop.subsetbppplaysnice}
when replacing $\simeq_{\textnormal b}$ and $\approx_{\textnormal b}$ with $\simeq_{\textnormal c}$ and $\approx_{\textnormal c}$,
respectively.
  
  \begin{Def}\label{Modified set realtion definitions.subset}
    Let $\cur{C},\cur{D}\in BPP(\man{M})$.
    We say that $\cur{C}$ is a subset of $\cur{D}$, denoted 
    $\cur{C}\subset_{\bpp.}\cur{D}$, if and only if
    there exists $S\in[{f}(\cur{D})]_{\textnormal b}$ so that
    ${f}(\cur{C})\subset S$.
  \end{Def}
  
  This ensures the following;
  
  \begin{Prop}\label{prop.subsetbppplaysnice}
    Let $\cur{C},\cur{D}\in\BPPSet$ then
    $\cur{C}\subset_{b.p.p.}\cur{D}$ if and only if
    for all $\phi\in\Env$, $\App{\phi}{\cur{C}}\subset\App{\phi}{\cur{D}}$ and
    $\AppBound{\phi}{\cur{C}}\subset\AppBound{\phi}{\cur{D}}$.
  \end{Prop}
  \begin{proof}
		Assume that $\cur{C}\subset_{b.p.p.}\cur{D}$.
    Let $p\in\App{\phi}{\cur{C}}$ then there exists $\gamma\in\cur{C}$
    so that $p$ is approached by $\phi\circ\gamma$, hence $\gamma\in f(\cur{C})$.
    Since
    $\cur{C}\subset_{b.p.p.}\cur{D}$ there exists
    $S\in[f(\cur{D})]_{\textnormal b}$ so that $f(\cur{C})\subset S$.
    Thus $\gamma\in S$, so that $p\in\App{\phi}{g(S)}$.
    By Definition \ref{def.first Equivalence Relation Induced}
    and Lemma \ref{lem.gcircfEquivalence}
    we see that $g(S)\approx_{\textnormal b} g\circ f(\cur{D})\approx_{\textnormal b} D$.
		From Definition \ref{approx 1} we see that
		$\App{\phi}{\cur{C}}\subset\App{\phi}{\cur{D}}$. If
		$p\in\AppBound{\phi}{\cur{C}}$ then $\gamma$ can be taken to be bounded
		so that $p\in\AppBound{\phi}{\cur{D}}$.

		Assume that for all $\phi\in\Env$, $\App{\phi}{\cur{C}}\subset\App{\phi}{\cur{D}}$ and
    $\AppBound{\phi}{\cur{C}}\subset\AppBound{\phi}{\cur{D}}$. For each
    $\phi\in\Env$ and $p\in\App{\phi}{\cur{D}}-\App{\phi}{\cur{C}}$
    we can choose $\mu_{\phi,p}\in f(\cur{D})$ so that $\phi\circ\mu_{\phi,p}$
    approaches $p$ and $\mu_{\phi,p}$ is bounded if $p\in\AppBound{\phi}{\cur{D}}$. Let 
    \[
      S=f(\cur{C})\cup\left\{\mu_{\phi,p}\in f(\cur{D}):\phi\in\Env,\, p\in\App{\phi}{\cur{D}}-\App{\phi}{\cur{C}}\right\}.
    \]

    We now show that $S\in\BPPInexCom$. Let $\gamma:[a,b)\to \man{M},\,\delta:[p,q)\to\man{M}\in S$ so that
    there exists $c\in[a,b)$ and $r\in[p,q)$ so that $\gamma([c,b))=\delta([r,q))$. This implies that
    for all $\phi\in\Env$ the limit points of $\phi\circ\gamma$ are the same as the limit points of
    $\phi\circ\delta$. In turn
    this, together with the definition of $S$ implies that either, 
    $\delta,\gamma\in f(\cur{C})$ or 
    $\delta,\gamma\in f(\cur{D})$.
    In either case $\delta$ and $\gamma$ are either both bounded or both 
    unbounded as
    $f(\cur{C}),f(\cur{D})\in\BPPInexCom$. This implies that
    $S\in\BPPInexCom$.
    
    By construction we have that
    $\App{\phi}{g(S)}=\App{\phi}{\cur{D}}$, $\AppBound{\phi}{g(S)}=\AppBound{\phi}{\cur{D}}$, for all $\phi\in\Env$
    and that $f(\cur{C})\subset S$. That is $\cur{C}\subset_{b.p.p.}\cur{D}$ as required.
  \end{proof}

  \begin{Cor}
    Let $\cur{C},\cur{D}\in\BPPSet$ then $\cur{C}\subset_{b.p.p.}\cur{D}$
    and $\cur{D}\subset_{b.p.p.}\cur{C}$ implies that $\cur{C}\approx_{\textnormal b}
    \cur{D}$.
  \end{Cor}
  \begin{proof}
    From Proposition \ref{prop.subsetbppplaysnice} we know that
    $\App{\phi}{\cur{C}}=\App{\phi}{\cur{D}}$
    and $$\AppBound{\phi}{\cur{C}} = \AppBound{\phi}{\cur{D}}$$
    which implies the result.
  \end{proof}

  Given $\cur{C},\cur{D}\in\BPPSet$ it can be the case that
  $f(\cur{C})\cup f(\cur{D})\not\in\BPPSet$.
  For example choose 
	$\gamma_i:\left[0,\infty\right)\to\man{M}\in\InexCom$, $i=0,1$,
	two 
	curves in $M$, that eventually have different images.
	Define $s:\left[0,\infty\right)\to\left[0,1\right)$
	by $s(x)=\frac{2}{\pi}\tan^{-1}\left(x\right)$.
	Let $\cur{C},\cur{D}\in\BPPSet$ be defined by
	$\cur{C}=g(\{\gamma_0,\gamma_1\})$ and 
	$\cur{D}=g(\{\gamma_0\circ s,\gamma_1\})$.
	The union of $\cur{C}$ and $\cur{D}$ makes little sense as
	$\gamma_0$ and $\gamma_0\circ s$ have incompatible parameters, i.e.
	the condition of Definition \ref{def.overline} does not hold on
	$f(\cur{C})\cup f(\cur{D})$. 
	Thus the union can only be defined when 
	$f(\cur{C})\cup f(\cur{D})\in\BPPSet$.
	
	\begin{Def}\label{Modified set realtion definitions.union}
    Let $\cur{C},\cur{D}\in BPP(\man{M})$.
	  If $f(\cur{C})\cup f(\cur{D})\in\BPPInexCom$ then
	  the union of
	  $\cur{C}$ and $\cur{D}$ is given
	  by $g(f(\cur{C})\cup f(\cur{C}))$. 
	  Otherwise the union cannot be defined. The union
	  of $\cur{C}$ and $\cur{D}$ is denoted $\cur{C}\cup_{b.p.p.}\cur{D}$.
  \end{Def}
	
	It is not necessarily the case that $\cur{C},\cur{D}\subset 
	\cur{C}\cup_{b.p.p.}\cur{D}$. We do have, however, the following;
	
	\begin{Prop}\label{prop.subsetandcapbppplaynice}
	  Let $\cur{C},\cur{D}\in\BPPSet$ so that $\cur{C}\cup_{b.p.p.}\cur{D}$
	  can be defined then 
	  $\cur{C},\cur{D}\subset_{b.p.p.}\cur{C}\cup_{b.p.p.}\cur{D}$.
	\end{Prop}
	\begin{proof}
		By Proposition \ref{lem.fdinvers} we have that
    \begin{align*}
      f(\cur{C}\cup_{b.p.p.}\cur{D})&=f\circ g(f(\cur{C})\cup f(\cur{D}))\\
       &= f\circ g\circ f(\cur{C})\cup f\circ g\circ f(\cur{C})\\
       &= f(\cur{C})\cup f(\cur{D}),
    \end{align*}
		so that $f(\cur{C})\cup f(\cur{D})\in[f(\cur{C}\cup_{b.p.p.}\cur{D})]_{\textnormal b}$.
		Since $f(\cur{C}),f(\cur{D})\subset f(\cur{C})\cup f(\cur{D})$ 
		we have our result.
	\end{proof}
	
	\begin{Cor}
	  Let $\cur{C},\cur{D}\in\BPPSet$ so that $\cur{C}\cup_{b.p.p.}\cur{D}$
	  can be defined then 
	  \begin{gather*}
	    \App{\phi}{\cur{C}},\, \App{\phi}{\cur{D}}\subset
	      \App{\phi}{\cur{C}\cup_{b.p.p.}\cur{D}}\\
			\AppBound{\phi}{\cur{C}},\, \AppBound{\phi}{\cur{D}}\subset
	      \AppBound{\phi}{\cur{C}\cup_{b.p.p.}\cur{D}}
	  \end{gather*}
	\end{Cor}
	\begin{proof}
	  This follows from Propositions \ref{prop.subsetbppplaysnice}
	  and \ref{prop.subsetandcapbppplaynice}
	\end{proof}

	Given $\cur{C},\cur{D}\in\BPPSet$ the set $f(\cur{C})\cap f(\cur{D})$
	is always an element of $\BPPInexCom$.
	
	\begin{Def}\label{Modified set realtion definitions.intersection}
    Let $\cur{C},\cur{D}\in BPP(\man{M})$.
    The intersection of $\cur{C}$ and $\cur{D}$ is
    defined by $g\left(f(\cur{C})\cap f(\cur{D})\right)$.
    We denote the intersection by $\cur{C}\cap_{\bpp.}\cur{D}$.
  \end{Def}
  
  Just as for $\cup_{b.p.p.}$ it maybe the case that
  $\cur{C}\cap_{b.p.p.}\cur{D}\not\subset\cur{C},\cur{D}$. We do have 
  the following, however;
  
  \begin{Prop}\label{prop.capb.p.p.subsetplaynice}
    Let $\cur{C},\cur{D}\in\BPPSet$, then 
	  $\cur{C}\cap_{b.p.p.}\cur{D}\subset_{b.p.p.}\cur{C},\cur{D}$.
  \end{Prop}
  \begin{proof}
    By Proposition \ref{lem.fdinvers} we have that
    \begin{align*}
      f(\cur{C}\cap_{b.p.p.}\cur{D})&=f\circ g(f(\cur{C})\cap f(\cur{D}))\\
       &= f\circ g\circ f(\cur{C})\cap f\circ g\circ f(\cur{C})\\
       &= f(\cur{C})\cap f(\cur{D}).
    \end{align*}
		Since $f(\cur{C})\cap f(\cur{D})\subset f(\cur{C}),f(\cur{D})$
		we have our result.
  \end{proof}

  \begin{Cor}
    Let $\cur{C},\cur{D}\in\BPPSet$  then 
	  \begin{gather*}
	    \App{\phi}{\cur{C}\cap_{b.p.p.}\cur{D}}\subset
	      \App{\phi}{\cur{C}},\, \App{\phi}{\cur{D}}\\
	    \AppBound{\phi}{\cur{C}\cap_{b.p.p.}\cur{D}}\subset
	      \AppBound{\phi}{\cur{C}},\, \AppBound{\phi}{\cur{D}}
	  \end{gather*}
  \end{Cor}
  \begin{proof}
    This follows from Propositions \ref{prop.subsetbppplaysnice}
    and \ref{prop.capb.p.p.subsetplaynice}.
  \end{proof}

	The operations $\cup_{b.p.p.}$ and $\cap_{b.p.p.}$ have the usual 
	relations with respect to each other.	
	\begin{Prop}
	  Let $\cur{C},\cur{D},\cur{E}\in\BPPSet$. Then
	  \begin{gather*}
	    \cur{C}\cup_{b.p.p.}\cur{D}\approx_{\textnormal b}\cur{D}\cup_{b.p.p.}\cur{C}\\
	    \cur{C}\cap_{b.p.p.}\cur{D}\approx_{\textnormal b}\cur{D}\cap_{b.p.p.}\cur{C}\\
	    (\cur{C}\cup_{b.p.p.}\cur{D})\cup_{b.p.p.}\cur{E}\approx_{\textnormal b}
	      \cur{C}\cup_{b.p.p.}(\cur{D}\cup_{b.p.p.}\cur{E})\\
	    (\cur{C}\cap_{b.p.p.}\cur{D})\cap_{b.p.p.}\cur{E}\approx_{\textnormal b}
	      \cur{C}\cap_{b.p.p.}(\cur{D}\cap_{b.p.p.}\cur{E})\\
	    \cur{C}\cup_{b.p.p.}(\cur{D}\cap_{b.p.p.}\cur{E})\approx_{\textnormal b}
	      (\cur{C}\cup_{b.p.p.}\cur{D})\cap_{b.p.p.}
	      (\cur{C}\cup_{b.p.p.}\cur{E})\\
	    \cur{C}\cap_{b.p.p.}(\cur{D}\cup_{b.p.p.}\cur{E})\approx_{\textnormal b}
	      (\cur{C}\cap_{b.p.p.}\cur{D})\cup_{b.p.p.}
	      (\cur{C}\cap_{b.p.p.}\cur{E})
	  \end{gather*}
	\end{Prop}
	\begin{proof}
	  We give the proof of the last statement. All other claims can be
	  proved using a similar method.
	  Proposition \ref{lem.fdinvers} implies that
	  \begin{align*}
	    f(\cur{C}\cap_{b.p.p.}(\cur{D}\cup_{b.p.p.}\cur{E}))
	      &=f\circ g\Bigl(f(\cur{C})\cap f\circ g\Bigl(f(\cur{D})\cup f(\cur{E})\Bigr)\Bigr)\\
				& = f\circ g\circ f(\cur{C})\cap 
				  \Bigl(f\circ g\circ f(\cur{D})\cup f\circ g\circ f(\cur{E})\Bigr)\\
				& = \Bigl(f\circ g\circ f(\cur{C})\cap f\circ g\circ f(\cur{D})\Bigr)
				\cup\Bigl(f\circ g\circ f(\cur{C})\cap f\circ g\circ f(\cur{E})\Bigr)\\
				&= f\circ g(f(\cur{C})\cap f(\cur{D}))\cup 
				  f\circ g(f(\cur{C})\cap f(\cur{E}))\\
				&= f(\cur{C}\cap_{b.p.p.}\cur{D})\cup f(\cur{C}\cap_{b.p.p.} \cur{E})\\
				&= f\circ g\circ f(\cur{C}\cap_{b.p.p.}\cur{D})\cup 
				  f\circ g\circ f(\cur{C}\cap_{b.p.p.} \cur{E})\\
				&= f\circ g\Bigl( f(\cur{C}\cap_{b.p.p.}\cur{D})
				  \cup f(\cur{C}\cap_{b.p.p.} \cur{E})\Bigr)\\
				&= f\Bigl((\cur{C}\cap_{b.p.p.}\cur{D})\cup_{b.p.p.}
	      (\cur{C}\cap_{b.p.p.}\cur{E})\Bigr).
	  \end{align*}
	  Thus by Definition \ref{def.secondinducedequivalence}
	  we know that $\cur{C}\cap_{b.p.p.}(\cur{D}\cup_{b.p.p.}\cur{E})
	  \approx_{\textnormal c} (\cur{C}\cap_{b.p.p.}\cur{D})\cup_{b.p.p.}
	      (\cur{C}\cap_{b.p.p.}\cur{E})$.
	  Hence by Proposition \ref{relation between equivalence relations},
	  $
	    \cur{C}\cap_{b.p.p.}(\cur{D}\cup_{b.p.p.}\cur{E})\approx_{\textnormal b}
	      (\cur{C}\cap_{b.p.p.}\cur{D})\cup_{b.p.p.}
	      (\cur{C}\cap_{b.p.p.}\cur{E}),
	  $
	  as required.
	\end{proof}

  Thus we now have an algebra of sets on $\BPPSet$ that implies
  appropriate relations for the 
  division of $\partial\phi(M)$, for all $\phi\in\Env$,
  induced by elements of $\BPPSet$.
  This is a necessary first step before
  analysing how the classification changes when the set of b.p.p.
  satisfying curves changes. This analysis makes up the
  second part of this series of papers.

\section*{Acknowledgements}
The author was partially funded by Marsden grant UOO-09-022.

\bibliographystyle{unsrt}
\bibliography{AB_classification_I_Ben_Whale}

\begin{thebibliography}{10}

\bibitem{Geroch1968a}
R.~Geroch.
\newblock Local characterization of singularities in general relativity.
\newblock {\em J. Math. Phys.}, 9:450--465, 1968.

\bibitem{Schmidt1971}
B.~G. Schmidt.
\newblock A new definition of singular points in general relativity.
\newblock {\em Gen. Rel. Grav.}, 1(3):269--280, 1971.

\bibitem{GerochPenroseKronheimer1972}
R.~Geroch, R.~Penrose, and E.~H. Kronheimer.
\newblock Ideal points in space-time.
\newblock {\em Proc. Roy. Soc. Lond. Ser. A}, 327(1571):545--567, 1972.

\bibitem{Marolf2003New}
D.~Marolf and S.~F. Ross.
\newblock {A new recipe for causal completions}.
\newblock {\em Class. Quantum Grav.}, 20(18):4085--4117, September 2003.

\bibitem{citeulike:8211160}
J.~Flores.
\newblock The causal boundary of spacetimes revisited.
\newblock {\em Comm. Math. Phys.}, 276(3):611--643, 2007.

\bibitem{Flores2010Final}
J.~Flores, J.~Herrera, and M.~S{\'a}nchez.
\newblock On the final definition of the causal boundary and its relation with
  the conformal boundary.
\newblock {\em Adv. Theor. Math. Phys}, 15(4), 2011.
\newblock (Preprint arXiv:1001.3270v2).

\bibitem{ScottSzekeres1994}
S.~M. Scott and P.~Szekeres.
\newblock The abstract boundary---a new approach to singularities of manifolds.
\newblock {\em J. Geom. Phys.}, 13(3):223--253, 1994.

\bibitem{Geroch1868}
R.~Geroch.
\newblock {What is a singularity in general relativity?}
\newblock {\em Ann. Phys.}, 48:526--540, 1968.

\bibitem{Ashley2002a}
M.~J. S.~L. Ashley.
\newblock {\em Singularity Theorems and the Abstract Boundary Construction}.
\newblock PhD thesis, Department of Physics, Australian National University,
  2002.
\newblock {http://hdl.handle.net/1885/46055}.

\bibitem{Whale2010}
B.~E. Whale.
\newblock {\em Foundations of and Applications for the Abstract Boundary
  Construction in Space-time}.
\newblock PhD thesis, Department of Quantum Science, Australian National
  University, 2010.
\newblock {http://hdl.handle.net/1885/49393}.

\bibitem{HawkingEllis1973}
S.~W. Hawking and G.~F.~R. Ellis.
\newblock {\em The Large Scale Structure of Space-Time}.
\newblock Cambridge University Press, 1973.

\bibitem{Whale2012b}
B.~E. Whale.
\newblock The dependence of the abstract boundary classification on a set of
  curves {II}: How the classification changes when the boundary parameter
  property satisfying set of curves changes.
\newblock Submitted with this paper.

\bibitem{FamaClarke1998}
C.~J. Fama and C.~J.~S. Clarke.
\newblock A rigidity result on the ideal boundary structure of smooth
  spacetimes.
\newblock {\em Class. Quantum Grav.}, 15(9):2829--2840, 1998.

\bibitem{FamaScott1994}
C.~J. Fama and S.~M. Scott.
\newblock Invariance properties of boundary sets of open embeddings of
  manifolds and their application to the abstract boundary.
\newblock In {\em Differential Geometry and Mathematical Physics}, volume 170
  of {\em Contem. Math.}, pages 79--111. AMS, 1994.

\end{thebibliography}

\end{document}